\theoremstyle{plain}
\newtheorem{theorem}{Theorem}
\newtheorem{lemma}[theorem]{Lemma}
\theoremstyle{definition}
\newtheorem{example}{Example}
\title{Generalized Boltzmann-Gibbs Distribution and the Electronic Partition Function Paradox}
\author[1,2]{Leandro Lyra Braga Dognini\thanks{E-mail: \href{mailto:leandro.dognini@uerj.br}{leandro.dognini@uerj.br}.}}
\affil[1]{\small Rio de Janeiro State University, Rua São Francisco Xavier 524, 20550-900, Rio de Janeiro, Brazil}
\affil[2]{\small Legislative Advisory, Federal Senate of Brazil, Praça dos Três Poderes, 70165-900, Brasília, Brazil}
\date{November 29, 2025} 
\begin{document}
\maketitle
\begin{abstract}
\noindent This paper generalizes the entropy maximization problem leading to the Boltzmann-Gibbs distribution through the nonadditive entropy $S_{q,s}(p)=k_{s}\sum^{W}_{i\geq1}p_{i}\ln_{q}1/p_{i}$, $q\in(0,1)$, which is a rescaled version of $S_{q}$ \cite{Tsallis1988} by a factor $k_{s}=k^{q}(e_{\max}/(W^{\sigma}-1))^{1-q}$, $\sigma>0$, varying according to the underlying energy spectrum and satisfying $k_{s}\rightarrow k$ (Boltzmann constant) as $q\rightarrow 1$. The maximization problem based on $S_{q,s}$ is used to derive analytical generalizations of the Boltzmann-Gibbs distribution for the case of an energy spectrum that uniformly approaches a continuous, unbounded limit with a common degeneracy, the harmonic oscillator, and the one-dimensional box. Furthermore, I demonstrate that this generalized problem yields a two-tier model with finite structural parameters $\beta$ for the hydrogen atom in free space and, therefore, can be used to circumvent the Electronic Partition Function Paradox and obtain a family of well-defined thermodynamic behaviors indexed by $q\in(0,1)$. In particular, for $q=0.5$, the specific heat of the free hydrogen atom becomes the Boltzmann constant. Finally, it is shown that all the limiting processes involved in these four cases lead naturally to the same definition of the scale factor $k_{s}$ that characterizes $S_{q,s}$ (``s'' stands, in this context, for ``spectrum'') in order to grant finite, smooth, macroscopically observable temperature values that are related to the entropic functional by $\partial S_{q,s}/\partial U=1/\vert T\vert^{q}_{\pm}$, which recovers $\partial S_{BG}/\partial U=1/T$ \cite{Clausius1865} as $q\rightarrow1$. 
\end{abstract}

%-------------------------------------------
% Paper Body
%-------------------------------------------
%--- Section ---%

\section{Introduction}\label{sec1}
Boltzmann-Gibbs distribution \cite{Boltzmann1872,Boltzmann1877,Gibbs1902} can be obtained from the following maximization problem
\begin{eqnarray}\label{maxProblemBG}
    \max& S_{BG}(p)\\
\text{s.t.}& p_{i}\geq0, 1\leq i\leq W\nonumber\\
&\sum^{W}_{i=1}p_{i}=1\nonumber\\
&\sum^{W}_{i=1}p_{i}E_{i}=0,\nonumber
\end{eqnarray}
with $S_{BG}(p)=k\sum^{W}_{i=1}p_{i}\ln 1/p_{i}$, $W\geq2$ the number of microstates, $0=e_{1}\leq \ldots\leq e_{W}=e_{\max}$ the energy levels of each microstate (all such values form the \textit{energy spectrum}), $0<U<e_{\max}$ the internal energy, $E_{i}=e_{i}-U$, $1\leq i\leq W$, $k$ the Boltzmann constant and $g(e)=\#\{1\leq i\leq W\mid e_{i}=e\}$ the degeneracy function. The solution $p_{*}\in\mathbb{R}^{W}_{++}$ of (\ref{maxProblemBG}) is given by
\begin{eqnarray}\label{eqBGDistribution1}
    p_{*i}=\frac{\exp(-\beta E_{i})}{\sum^{W}_{j=1}\exp(-\beta E_{j})},
\end{eqnarray}
for $1\leq i\leq W$, with the structural parameter $\beta\in\mathbb{R}$ satisfying
\begin{eqnarray}\label{eqFirstBG}
    \sum^{W}_{i=1}\exp(-\beta E_{i})E_{i}=0.
\end{eqnarray}
Following \cite{DogniniTsallis2025}, I call $\beta$ \textit{structural} since, by (\ref{eqFirstBG}), it depends solely on the energy spectrum and the internal energy of the system. Instead of passing through (\ref{eqFirstBG}) to calculate the distribution (\ref{eqBGDistribution1}), one could measure the temperature of the system $T\in\mathbb{R}/\{0\}$ and, since \cite{Clausius1865}
\begin{eqnarray}\label{eqTempDef}
    \frac{\partial S_{BG}(p_{*})}{\partial U}=\frac{1}{T},
\end{eqnarray}
the Envelope Theorem implies $\beta=1/kT$ and $p_{*i}\propto\exp(-E_{i}/kT)$ (i.e., the Boltzmann factor), $1\leq i\leq W$, so that
\begin{eqnarray}\label{eqBGDistribution2}
    p_{*i}=\frac{\exp(-E_{i}/kT)}{\sum^{W}_{j=1}\exp(-E_{j}/kT)},
\end{eqnarray}
for $1\leq i\leq W$, which is the classical form of the Boltzmann-Gibbs distribution. One important feature to note is that the constant $k$ defines the temperature scale and is, therefore, the link between (\ref{maxProblemBG}) and macroscopically observed quantities, but it does not alter the optimal distribution that solves (\ref{maxProblemBG}) (i.e., it does not alter the structural parameter $\beta$).

In this paper, I study a generalized version of (\ref{maxProblemBG})\footnote{See also \textcite{DogniniTsallis2025} for the study of another class of generalized versions of (\ref{maxProblemBG}) with the entropic functional $S_{q}(p)=k\sum^{W}_{i=1}p_{i}\ln_{q} 1/p_{i}$ and energy constraints of the form $\sum^{W}_{i=1}p_{i}^{q^{\prime}}E_{i}=0$, $q^{\prime}>0$.} given by  
\begin{eqnarray}\label{maxProblem}
\max& S_{q,s}(p)\\
\text{s.t.}& p_{i}\geq0, 1\leq i\leq W\nonumber\\
&\sum^{W}_{i=1}p_{i}=1\nonumber\\
&\sum^{W}_{i=1}p_{i}E_{i}=0,\nonumber
\end{eqnarray}
with $0<q<1$. The entropic functional $S_{q,s}$ is given by\footnote{As a remainder, the $q$-logarithm $\ln_{q}:[0,+\infty)\rightarrow\mathbb{R}$, $q\neq1$, is given by $\ln_{q}z=(z^{1-q}-1)/(1-q)$, $z\geq0$, and the $q$-exponential $\exp_{q}:\mathbb{R}\rightarrow\mathbb{R}$ is given by $\exp_{q}z=[1+(1-q)z]_{+}^{\frac{1}{1-q}}$, $z\in\mathbb{R}$.}  
\begin{eqnarray}\label{eqSqNorm}
    S_{q,s}(p)=k_{s}\sum^{W}_{i=1} p_{i}\ln_{q}1/p_{i}=k_{s}\biggr(\frac{\sum^{W}_{i=1}p^{q}_{i}-1}{1-q}\biggr),
\end{eqnarray}
for $p\in\mathbb{R}^{W}_{+}$, with
\begin{eqnarray}\label{eqKeIntro}
    k_{s}=k^{q}\biggr(\frac{e_{\max}}{W^{\sigma}-1}\biggr)^{1-q},
\end{eqnarray}
for $W\geq2$, $e_{\max}>0$, $\sigma>0$. In particular, $k_{s}\rightarrow k$ as $q\rightarrow 1$. One may notice that, for all $0<U<e_{\max}$, there is a unique and strictly positive solution $p_{*}(U)>>0$ for (\ref{maxProblem}) (c.f., Proposition 1 from \textcite{DogniniTsallis2025}) and this solution does not depend on the specific form of the scale factor $k_{s}$, which only becomes relevant once we need to define a temperature for the system. To advance this definition, after analyzing the dimensions of (\ref{eqKeIntro}), I \textit{postulate} that
\begin{eqnarray}\label{eqPostulateT}
    \frac{\partial S_{q,s}(p_{*}(U))}{\partial U}=\frac{1}{\vert T\vert^{q}_{\pm}}
\end{eqnarray}
with $\vert z\vert^{q}_{\pm}=\text{sign}(z)\vert z\vert^{q}$, $z\in\mathbb{R}$, in order to obtain a dimensionally coherent generalization of (\ref{eqTempDef}) that allows for both positive and negative temperature values.

Clearly, by taking $q\rightarrow1$ in (\ref{maxProblem}) we obtain (\ref{maxProblemBG}), and in (\ref{eqPostulateT}) we obtain (\ref{eqTempDef}). The entropic functional $S_{q,s}$ in (\ref{eqSqNorm}) is a rescaled version of the traditional nonadditive\footnote{See \textcite{DogniniTsallis2025JMP} for a discussion on nonadditive functionals and their relation to \textit{entropic extensivity}.} entropic functional $S_{q}(p)=k\sum^{W}_{i=1} p_{i}\ln_{q}1/p_{i}$ \cite{Tsallis1988} (which is the unique one to be simultaneously trace-form and composable \cite{EncisoTempesta2017,Tsallis2023}) due to the fact that the scale factor $k_{s}>0$ is allowed to vary with the energy spectrum of the physical system (``s'' stands, in this context, for ``spectrum''). 

Traditionally, one considers all these scale factors as the Boltzmann constant (i.e., $k_{s}=k$), regardless of the energy spectrum being analyzed. However, these different scale factors are, as the next sections demonstrate, necessary to grant finite, smooth, macroscopically observable temperature values for physical systems whose description involves some infinitude that requires a limiting process to be modeled (i.e., solving (\ref{maxProblem}) for a finite energy spectrum and then taking a convenient limit to model the system of interest, such as a harmonic oscillator or the hydrogen atom, and its thermodynamic behavior). 

In Sections \ref{secContinuousSpectrum}, \ref{secHarmonicOscillator}, and \ref{secOneDimBox}, I show that (\ref{eqKeIntro}) allows one to obtain a generalized Boltzmann-Gibbs distribution for the case where the energy spectrum uniformly approaches a continuous, unbounded limit with a common degeneracy, the harmonic oscillator and the one-dimensional box. Furthermore, in Section \ref{secElectronicPart}, this same scale factor allows us to obtain a family of well-behaved thermodynamic predictions for the hydrogen atom in free space, circumventing the Electronic Partition Function Paradox \cite{Strickler1966} that emerges from the divergence of the partition function.

The literature furnishes different methods for dealing with this divergence. For instance, one may limit the energy spectrum (e.g., Fermi criteria \cite{Fermi1924}), adopt a two-tier model \cite{DAMMANDO2010}, use zeta function regularization \cite{PLASTINO2019} or generalize (\ref{maxProblemBG}) through nonadditive entropic functionals \cite{TsallisLucena1995}. Section \ref{secElectronicPart} reveals that by adopting the nonadditive functional $S_{q,s}$ and the generalized optimization problem (\ref{maxProblem}), one obtains a two-tier analytical model for the hydrogen atom in free space with a family of well-behaved thermodynamic predictions indexed by $q\in(0,1)$. In particular, for $q=0.5$, the specific heat becomes the Boltzmann constant.

\section{The constrained maximization of $S_{q,s}$}\label{sec2}
I follow the general method developed by \textcite{DogniniTsallis2025}. For $0<q<1$, let $\psi:\mathbb{R}^{W}_{++}\rightarrow\mathbb{R}^{W}_{++}$ be given by 
\begin{eqnarray*}
    \psi(p)=\frac{\nabla S_{q,s}(p)}{\nabla S_{q,s}(p)p^T}=\biggr(\frac{p_{1}^{q-1}}{\sum^{W}_{i=1}p_{i}^{q}},\ldots,\frac{p_{W}^{q-1}}{\sum^{W}_{i=1}p_{i}^{q}}\biggr),
\end{eqnarray*}
and notice that, since $q\neq1$, $\psi(\cdot)$ is a diffeomorphism and
\begin{eqnarray*}
\psi^{-1}(x)=\biggr(\frac{x_{1}^{\frac{1}{q-1}}}{\sum^{W}_{i=1}x_{i}^{\frac{q}{q-1}}},\ldots,\frac{x_{W}^{\frac{1}{q-1}}}{\sum^{W}_{i=1}x_{i}^{\frac{q}{q-1}}}\biggr).
\end{eqnarray*}
Let $p_{*}\in\mathbb{R}^{W}_{++}$ be the unique, strictly positive solution of (\ref{maxProblem}) and $x_{*}=\psi(p_{*})$. First-order conditions are
\begin{eqnarray}\label{eqFirstOrder}
    \frac{k_{s}q}{1-q}(p_{*1}^{q-1},\ldots, p_{*W}^{q-1})=\lambda_{1}(1,\ldots,1)+\lambda_{2}(E_{1},\ldots,E_{W}),
\end{eqnarray}
for $\lambda_{1},\lambda_{2}\in\mathbb{R}$ the Lagrange multipliers. Then, through the constraints of (\ref{maxProblem}), we obtain
\begin{eqnarray*}
    \lambda_{1}=\frac{k_{s}q\sum^{W}_{i=1}p^{q}_{*i}}{1-q},
\end{eqnarray*}
so that we can write (\ref{eqFirstOrder}) as
\begin{eqnarray*}
    x_{*}=(1,\ldots,1)+(1-q)\frac{\lambda_{2}}{k_{s}q\sum^{W}_{i=1}p_{*i}^{q}}E=(1,\ldots,1)+(1-q)\beta E,
\end{eqnarray*}
with $E=(E_{1},\ldots,E_{W})$. Since $p_{*}=\psi^{-1}(x_{*})$ and $\sum^{W}_{i=1}p_{*i}=1$, we have that $\beta\in\mathbb{R}$ satisfies
\begin{eqnarray}\label{eqBetaDiscreteOriginal}
    \sum_{i=1}^{W} (1+(1-q)\beta E_{i})^{\frac{1}{q-1}}=\sum_{i=1}^{W} (1+(1-q)\beta E_{i})^{\frac{q}{q-1}},
\end{eqnarray}
which can also be written as
\begin{eqnarray}\label{eqBetaDiscrete}
    \sum_{i=1}^{W} \exp_{2-q}(-\beta E_{i})E_{i}=0.
\end{eqnarray}
It is worth highlighting that (\ref{eqBetaDiscrete}) generalizes 
(\ref{eqFirstBG}), since it fully characterizes the structural parameter $\beta$ that defines the solution of (\ref{maxProblem}). Furthermore,
\begin{eqnarray}\label{eqProbDiscrete1}
    p_{*i}=\frac{(1+(1-q)\beta E_{i})^{\frac{1}{q-1}}}{\sum^{W}_{j=1}(1+(1-q)\beta E_{j})^{\frac{1}{q-1}}}=\frac{\exp_{2-q}(-\beta E_{i})}{\sum^{W}_{j=1}\exp_{2-q}(-\beta E_{j})},
\end{eqnarray}
for $1\leq i\leq W$, which recovers (\ref{eqBGDistribution1}) as $q\rightarrow 1$. Assuming that the value function of (\ref{maxProblem}) (i.e., $S_{q,s}(p_{*}(\cdot))$) is differentiable and $\beta\neq0$, the Envelope Theorem and (\ref{eqPostulateT}) imply
\begin{eqnarray*}
    \frac{1}{\vert T\vert^{q}_{\pm}}=\frac{\partial S_{q,s}(p_{*}(U))}{\partial U}=\lambda_{2}=\beta k_{s}q \sum^{W}_{i=1}p_{*i}^{q}=\beta k_{s} q (\sum^{W}_{i=1}\exp_{2-q}(-\beta E_{i}))^{1-q},
\end{eqnarray*}
so that
\begin{eqnarray}\label{eqFundamentalTBeta}
    \beta=\frac{1}{\vert T\vert^{q}_{\pm} k_{s}q (\sum^{W}_{i=1}\exp_{2-q}(-\beta E_{i}))^{1-q}}.
\end{eqnarray}
 In particular, $\text{sign}(\beta)=\text{sign}(T)$, if $\beta\neq0$. Finally, we obtain
\begin{eqnarray}\label{eqGeneralizedBGdiscrete}
    p_{*i}=\frac{\exp_{2-q}(-E_{i}/k_{s}\vert T\vert^{q}_{\pm} q (\sum^{W}_{l=1}\exp_{2-q}(-\beta E_{l}))^{1-q})}{\sum^{W}_{j=1}\exp_{2-q}(-E_{j}/k_{s}\vert T\vert^{q}_{\pm} q (\sum^{W}_{l=1}\exp_{2-q}(-\beta E_{l}))^{1-q})},
\end{eqnarray}
for $1\leq i\leq W$, which recovers (\ref{eqBGDistribution2}) as $q\rightarrow 1$. To end this section, an analytical example is presented (c.f., Proposition 3 and Example 1 in \cite{DogniniTsallis2025}).

\begin{example}
Let the states be $0=e_{1}<e_{2}=0.5<e_{3}=1$ and $0<U<1$, so that (\ref{maxProblem}) can be written as
\begin{eqnarray*}
    \max_{0\leq p_{1}\leq 1}\frac{k_{s}}{(1-q)}\biggr(p_{1}^{q}+\biggr(\frac{(1-U)-p_{1}}{1-e_{2}}\biggr)^{q}+\biggr(\frac{U-e_{2}+p_{1}e_{2}}{1-e_{2}}\biggr)^{q}-1 \biggr).
\end{eqnarray*}
We let $k=1$, so that $k_{s}=1/2^{1-q}$. The family of optimal value functionals $\{S_{q}(p_{*}(\cdot))\}_{0<q<1}$ is depicted in Figure \ref{figSqOptimal}.
\begin{figure}[H]
            \centering
            \includegraphics[width=0.6\linewidth]{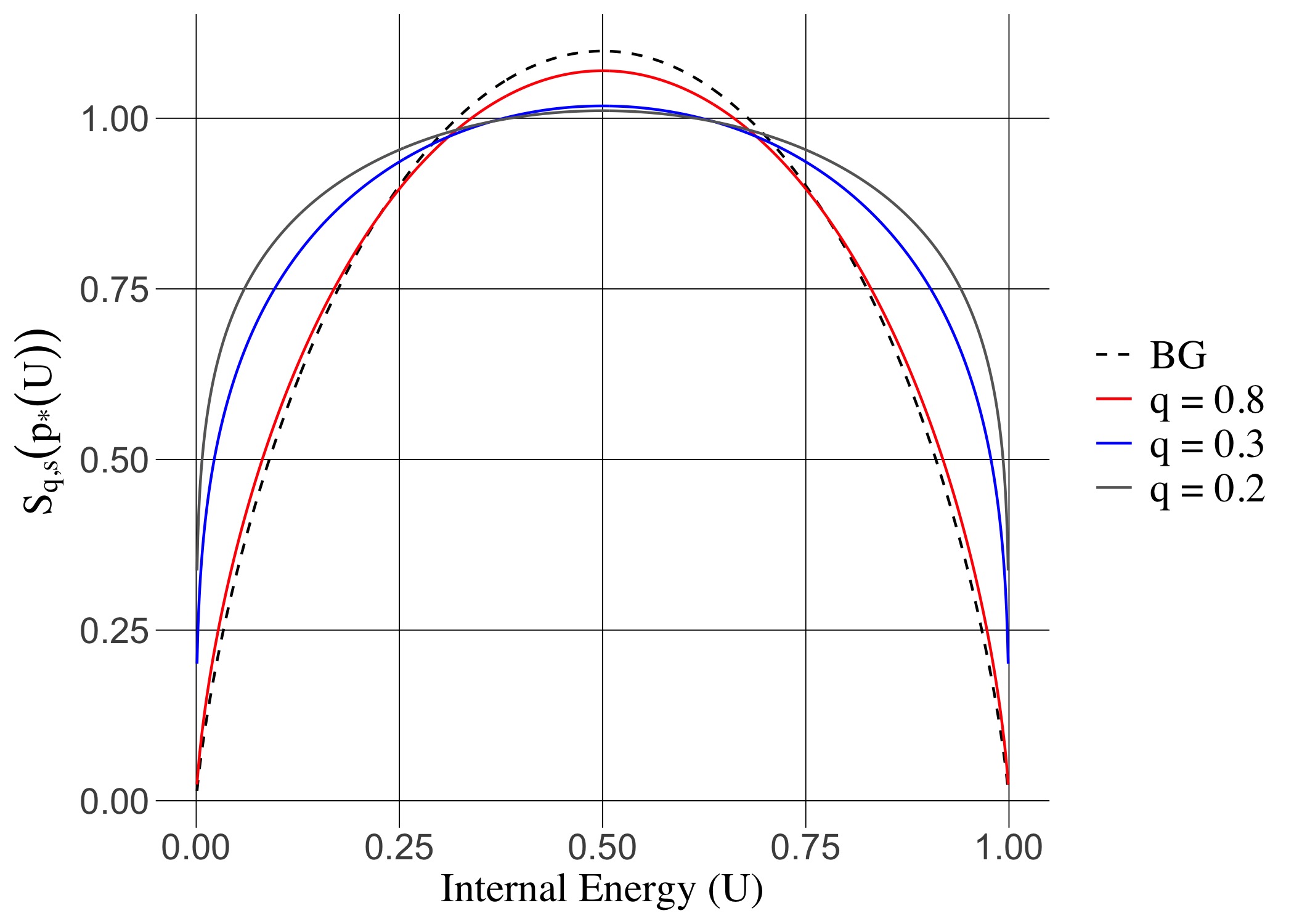}
            \caption{Graphic depiction of $S_{q,s}(p_{*}(U))$, $U\in[0,1]$, $k=1$, for $q=0.2, 0.3, 0.8$, and the limiting Boltzmann-Gibbs (BG) case (i.e., $q\rightarrow 1$).}
            \label{figSqOptimal}
\end{figure}
\end{example}

\section{The case of an energy spectrum that uniformly approaches a continuous, unbounded limit with a common degeneracy}\label{secContinuousSpectrum}

To illustrate the applicability of (\ref{maxProblem}), I analyze the thermodynamic properties of a physical system with an energy spectrum that uniformly approaches a continuous, unbounded limit with a common degeneracy.

\subsection{The Boltzmann-Gibbs distribution when the energy spectrum uniformly approaches a continuous, unbounded limit with a common degeneracy}\label{subsec41}
Notice that the optimization problem (\ref{maxProblemBG}) can also be written as
\begin{eqnarray}\label{eqEnergyDiscrete}
    \max& k\sum_{0\leq e\leq e_{\max}} g(e)p_{e}\ln1/p_{e}\\
\text{s.t.}& p_{e}\geq0, 0\leq e\leq e_{\max}\nonumber\\
&\sum_{0\leq e\leq e_{\max}}g(e)p_{e}=1\nonumber\\
&\sum_{0\leq e\leq e_{\max}}g(e)p_{e}e=U,\nonumber
\end{eqnarray}
with $p_{e}\geq0$ representing the probability of a single microstate with energy $e$ and the sums being evaluated for all values in the energy spectrum and $\sum_{0\leq e\leq e_{\max}}g(e)=W$. Following (\ref{eqBGDistribution1}), the solution to (\ref{eqEnergyDiscrete}) is given by
\begin{eqnarray*}
    p_{e}(U)=\frac{\exp(-\beta(e-U))}{\sum_{0\leq e\leq e_{\max}}g(e)\exp(-\beta(e-U))}
\end{eqnarray*}
with $p_{e}(U)$ representing the optimal probability of a single microstate with energy $e$ when the internal energy is $0<U<e_{\max}$, and $\beta\in\mathbb{R}$ satisfying
\begin{eqnarray}\label{eqEnergyDiscrete1}
    \sum_{0\leq e\leq e_{\max}}g(e)\exp(-\beta(e-U))(e-U)=0.
\end{eqnarray}
Let
\begin{eqnarray}
    e_{n}(e_{\max},N)=\frac{(n-1)e_{\max}}{N-1},
\end{eqnarray}
for $1\leq n\leq N$, $N\geq2$, $e_{\max}>0$. The energy spectrum is given by $0=e_{1}(e_{\max},N)<\ldots<e_{W}(e_{\max},N)=e_{\max}$ and the degeneracy function is $g(e_{n})=m\geq1$, $1\leq n\leq N$ (clearly, for $m=1$, the system is non-degenerate), so that 
\begin{eqnarray}\label{eqDegenNW}
    \sum_{0\leq e\leq e_{\max}}g(e)=mN=W.
\end{eqnarray}
Let $\beta(U,e_{\max},N)\in\mathbb{R}$ satisfy (\ref{eqEnergyDiscrete1}) when the internal energy is $0<U<e_{\max}$. Then, we can write (\ref{eqEnergyDiscrete1}) as
\begin{eqnarray}\label{eqEnergyDiscrete2}
    \frac{e_{\max}}{N-1}\sum^{N}_{n=1}\exp(-\beta(U,e_{\max},N)(e_{n}(e_{\max},N)-U))(e_{n}(e_{\max},N)-U)=0
\end{eqnarray}
Assuming $\lim_{N\rightarrow \infty}\beta(U,e_{\max},N)=\beta(U,e_{\max})$ and the appropriate convergence, (\ref{eqEnergyDiscrete2}) can be seen as approximating a Riemann sum that yields the following integral equation for $\beta(U,e_{\max})$
\begin{eqnarray}\label{eqContinuousDegeneracy}
    \int^{e_{\max}}_{0}\exp(-\beta(U,e_{\max})(e-U))(e-U)de=0.
\end{eqnarray}
Further assuming $\lim_{e_{\max}\rightarrow\infty}\beta(U,e_{\max})=\beta(U)$, (\ref{eqContinuousDegeneracy}) furnishes
\begin{eqnarray*}
    \int^{\infty}_{0}\exp(-\beta(U)(e-U))(e-U)de=\frac{1}{\beta(U)^{2}} \exp(x)(x-1)\biggr\vert^{-\infty}_{\beta(U)U}=0,
\end{eqnarray*}
and, therefore,
\begin{eqnarray}\label{eqBoltzEnegyBeta}
    \beta(U)=\frac{1}{U}.
\end{eqnarray}
In particular, (\ref{eqBoltzEnegyBeta}) can be seen as a statement that, for $N>>e_{\max}>>1$ (the order of the previous limits yields this relation on the orders of magnitude of $N$ and $e_{\max}$), we have $\beta(U,e_{\max},N)\approx 1/U$.

Next, define the following function $\rho(\cdot,U,e_{\max},N):[0,e_{\max}]\rightarrow\mathbb{R}_{+}$,
\begin{eqnarray*}
    \rho(e,U,e_{\max},N)=\begin{cases}
        mp_{e_{2}(e_{\max},N)}(U)(N-1)/e_{\max},\text{ if }0=e_{1}(e_{\max},N)\leq e\leq e_{2}(e_{\max},N) \\
        \ldots\\
        mp_{e_{N}(e_{\max},N)}(U)(N-1)/e_{\max},\text{ if }e_{N-1}(e_{\max},N)< e\leq e_{N}(e_{\max},N),
    \end{cases}
\end{eqnarray*}
and notice that $\rho(\cdot,U,e_{\max},N)$ is a \textit{density estimate} over the energy spectrum. Then, for $2\leq n(e)\leq N$ satisfying $e_{n(e)-1}(e_{\max},N)\leq e <e_{n(e)}(e_{\max},N)$,
\begin{eqnarray*}
    \int^{e_{n(e)}(e_{\max},N)}_{e_{n(e)-1}(e_{\max},N)}\rho(s,U,e_{\max},N)ds&=&\rho(e,U,e_{\max},N)\frac{e_{\max}}{N-1}\\
    &=&mp_{e_{n(e)}(e_{\max},N)}\\
    &=&\frac{m\exp(-\beta(U,e_{\max},N)(e_{n(e)}(e_{\max},N)-U)/k)}{\sum^{N}_{j=1}m\exp(-\beta(U,e_{\max},N)(e_{j}(e_{\max},N)-U)/k)}
\end{eqnarray*}
 and, therefore,
\begin{eqnarray*}
    \rho(e,U,e_{\max},N)=\frac{\exp(-\beta(U,e_{\max},N)(e_{i(e)}(e_{\max},N)-U))}{\frac{e_{\max}}{N-1}\sum^{N}_{j=1}\exp(-\beta(U,e_{\max},N)(e_{j}(e_{\max},N)-U))}.
\end{eqnarray*}
Since $\lim_{e_{\max}\rightarrow\infty}(\lim_{N\rightarrow\infty} \beta(U,e_{\max},N))=\beta(U)=1/U$ and $\lim_{N\rightarrow\infty} e_{n(e)}(U,e_{\max},N)=e$, the continuous Boltzmann-Gibbs distribution over the energy spectrum is given by
\begin{eqnarray}\label{eqBGDistributionClassical}
    \rho(e,U)=\lim_{e_{\max}\rightarrow\infty}(\lim_{N\rightarrow\infty} \rho(e,U,e_{\max},N))=\frac{\exp(1-e/U)}{\int^{+\infty}_{0}\exp(1-s/U)ds}=\frac{\exp(-e/U)}{U}.
\end{eqnarray}
Furthermore, since 
\begin{eqnarray*}
    T(U,e_{\max},N)=\frac{1}{k\beta(U,e_{\max},N)},
\end{eqnarray*}
we can define the limiting temperature of our system as
\begin{eqnarray}\label{eqTUContinuous}
T=\lim_{e_{\max}\rightarrow+\infty}\biggr(\lim_{N\rightarrow+\infty}\frac{1}{k\beta(U,e_{\max},N)}\biggr)=\frac{1}{k\beta(U)}=\frac{U}{k},
\end{eqnarray}
thus obtaining a constant specific heat $dU/dT=k$ and, finally,
\begin{eqnarray}\label{eqBGDistributionClassical2}
    \rho(e,T)=\frac{\exp(-e/kT)}{kT}.
\end{eqnarray}

\subsection{The generalized Boltzmann-Gibbs distribution when the energy spectrum uniformly approaches a continuous, unbounded limit with a common degeneracy}\label{secContinuousUnboundedSpectrum}

This subsection reveals that (\ref{maxProblem}) provides a well-behaved generalization of our previous results for $1/2\leq q<1$ and $\sigma=1$ in (\ref{eqKeIntro}). Notice that (\ref{eqBetaDiscreteOriginal}) can be written as
\begin{eqnarray}\label{eqEnergyDiscrete3}
     \frac{e_{\max}}{N-1}\sum^{N}_{n=1}(1+(1-q)\beta(U,e_{\max},N)(e_{n}(e_{\max},N)-U))^{\frac{1}{q-1}}=\ldots\nonumber\\
     \ldots\frac{e_{\max}}{N-1}\sum^{N}_{n=1}(1+(1-q)\beta(U,e_{\max},N)(e_{n}(e_{\max},N)-U))^{\frac{q}{q-1}}.
\end{eqnarray}
Assuming $\lim_{W\rightarrow \infty}\beta(U,e_{\max},N)=\beta (U,e_{\max})$ and the appropriate convergence, (\ref{eqEnergyDiscrete3}) can be seen as approximating a Riemann sum that yields the following integral equation for $\beta (U,e_{\max})$,
\begin{eqnarray}\label{eqContinuousDegeneracy2}
    \int^{e_{\max}}_{0}(1+(1-q)\beta (U,e_{\max})(e-U))^{\frac{1}{q-1}}de=\int^{e_{\max}}_{0}(1+(1-q)\beta (U,e_{\max})(e-U))^{\frac{q}{q-1}}de.
\end{eqnarray}
Further assuming $\lim_{e_{\max}\rightarrow\infty}\beta (U,e_{\max})=\beta (U)>0$, (\ref{eqContinuousDegeneracy2}) furnishes
\begin{eqnarray}
    \int^{+\infty}_{0}(1+(1-q)\beta (U)(e-U))^{\frac{1}{q-1}}de=\int^{+\infty}_{0}(1+(1-q)\beta (U)(e-U))^{\frac{q}{q-1}}de.
\end{eqnarray}
Since $\beta (U)>0$,
\begin{eqnarray*}
    \int^{+\infty}_{0}(1+(1-q)\beta (U)(e-U))^{\frac{1}{q-1}}de
    =\frac{(1-(1-q)\beta (U)U)^{\frac{q}{q-1}}}{\beta (U)q},
\end{eqnarray*}
Also, $\beta (U)>0$ and $1/2\leq q<1$ imply
\begin{eqnarray*}
    \int^{+\infty}_{0}(1+(1-q)\beta (U)(e-U))^{\frac{q}{q-1}}de
    =\frac{(1-(1-q)\beta (U)U)^{\frac{2q-1}{q-1}}}{\beta (U)(2q-1)}.
\end{eqnarray*}
Therefore,
\begin{eqnarray}\label{eqContinuousLambdaU}
   \frac{(1-(1-q)\beta (U)U)^{\frac{q}{q-1}}}{\beta (U)q}=\frac{(1-(1-q)\beta (U)U)^{\frac{2q-1}{q-1}}}{\beta (U)(2q-1)}\implies \beta (U)=\frac{1}{qU}>0,
\end{eqnarray}
which generalizes (\ref{eqBoltzEnegyBeta}). By the same reasoning that leads to (\ref{eqBGDistributionClassical}), (\ref{eqProbDiscrete1}) and (\ref{eqContinuousLambdaU}) imply
\begin{eqnarray}
    \rho(e,U)=\frac{\exp_{2-q}(-\beta (U)(e-U))}{\int^{+\infty}_{0}\exp_{2-q}(-\beta (U)(s-U))ds}
    =\frac{1}{U}\biggr(\frac{2q-1}{q}\biggr)^{\frac{q}{1-q}}\exp_{2-q}\biggr(\frac{1}{q}-\frac{e}{qU}\biggr),
\end{eqnarray}
which generalizes (\ref{eqBGDistributionClassical}). Since $\beta(U)>0$, for $N,e_{\max}>>1$ we have $\beta(U,e_{\max},N)>0$, so that (\ref{eqFundamentalTBeta}) implies that the corresponding temperature is positive and given by
\begin{eqnarray}
    T(U,e_{\max},N)^{q}=\frac{1}{q\beta(U,e_{\max},N)}\frac{1}{k_{s}(\sum^{N}_{i=1}m\exp_{2-q}(-\beta(U,e_{\max},N)(e_{i}(e_{\max},N)-U)))^{1-q}}.
\end{eqnarray}
Then, notice that
\begin{eqnarray*}
    T(U)^{q}&=&\lim_{e_{\max}\rightarrow+\infty}(\lim_{N\rightarrow+\infty} T(U,e_{\max},N)^{q})\\
    &=&\frac{U}{(\int^{+\infty}_{0}\exp_{2-q}(-\beta (U)(s-U))ds)^{1-q}}\lim_{e_{\max}\rightarrow+\infty}\biggr(\lim_{N\rightarrow+\infty} \frac{e_{\max}^{1-q}}{(N-1)^{1-q}m^{1-q}k_{s}}\biggr)\\
    &=&\frac{U}{k^{q}(\int^{+\infty}_{0}\exp_{2-q}(-\beta (U)(s-U))ds)^{1-q}},
\end{eqnarray*}
where the last equality is due to (\ref{eqKeIntro}), with $\sigma=1$, and (\ref{eqDegenNW}), since
\begin{eqnarray*}
    \lim_{e_{\max}\rightarrow+\infty}\biggr(\lim_{N\rightarrow+\infty}\frac{e_{\max}^{1-q}}{(N-1)^{1-q}m^{1-q}k_{s}}\biggr)=\lim_{N\rightarrow+\infty}\frac{1}{k^{q}}\biggr(\frac{1-1/Nm}{1-1/N}\biggr)^{1-q}=\frac{1}{k^{q}}.
\end{eqnarray*}
Therefore,
\begin{eqnarray*}
    T(U)^{q}=\frac{U}{k^{q}(\int^{+\infty}_{0}\exp_{2-q}(-\beta (U)(s-U))ds)^{1-q}}=\frac{1}{k^{q}}\biggr(\frac{2q-1}{q}\biggr)^{q}U^{q}\implies T=\frac{2q-1}{kq}U,
\end{eqnarray*}
so that
\begin{eqnarray}\label{eqBGGeneralizedUT}
    U=\frac{qkT}{2q-1},
\end{eqnarray}
which generalizes (\ref{eqTUContinuous}) . In particular, the specific heat is constant and given by
\begin{eqnarray}
    \frac{dU}{dT}=\frac{qk}{2q-1},
\end{eqnarray}
what is illustrated in Figure \ref{figBGGeneralized}.

\begin{figure}[h]
            \centering
            \includegraphics[width=0.6\linewidth]{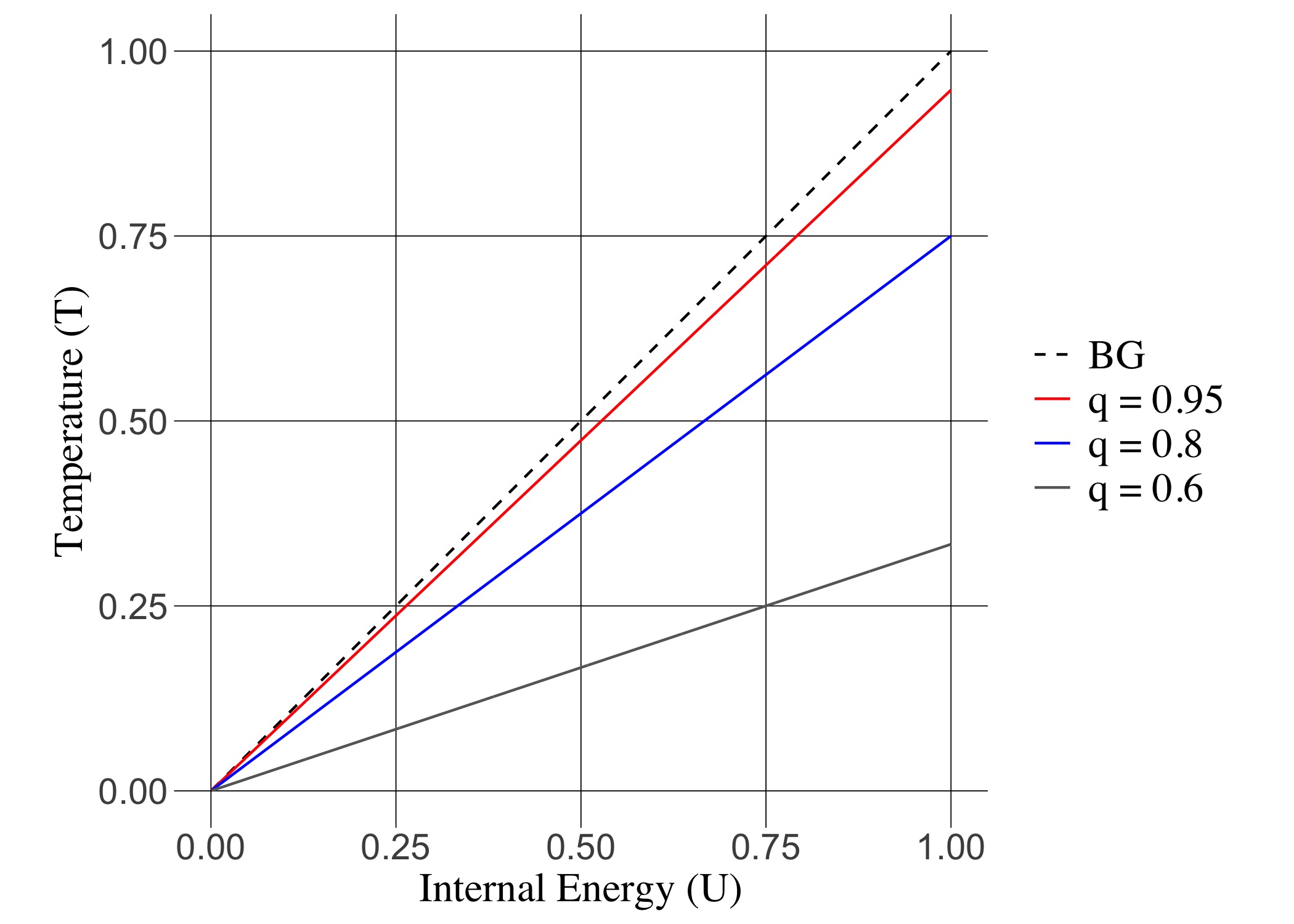}
            \caption{Graphic depiction of the temperature and the internal energy for a system 
            with an energy spectrum that uniformly approaches a continuous, unbounded limit with a common degeneracy, using Boltzmann-Gibbs entropy and $S_{q,s}$ with $q=0.6,0.8, 0.95$, for $k=1$.}
            \label{figBGGeneralizedUT}
\end{figure}

Finally, (\ref{eqBGGeneralizedUT}) implies
\begin{eqnarray}\label{eqBGGeneralized}
    \rho(e,T)=\biggr(\frac{2q-1}{q}\biggr)^{\frac{1}{1-q}}\frac{\exp_{2-q}(q^{-1}-(2q-1)e/q^{2}kT)}{kT}.
\end{eqnarray}
which generalizes the Boltzmann-Gibbs distribution (\ref{eqBGDistributionClassical2}) through the entropic functional $S_{q,s}$, $1/2\leq q<1$. Furthermore, Figure \ref{figBGGeneralized} depicts (\ref{eqBGGeneralized}) and reveals that the reduction of $q$ furnishes distributions with fat tails for the same temperature value.

\begin{figure}[h]
            \centering
            \includegraphics[width=0.6\linewidth]{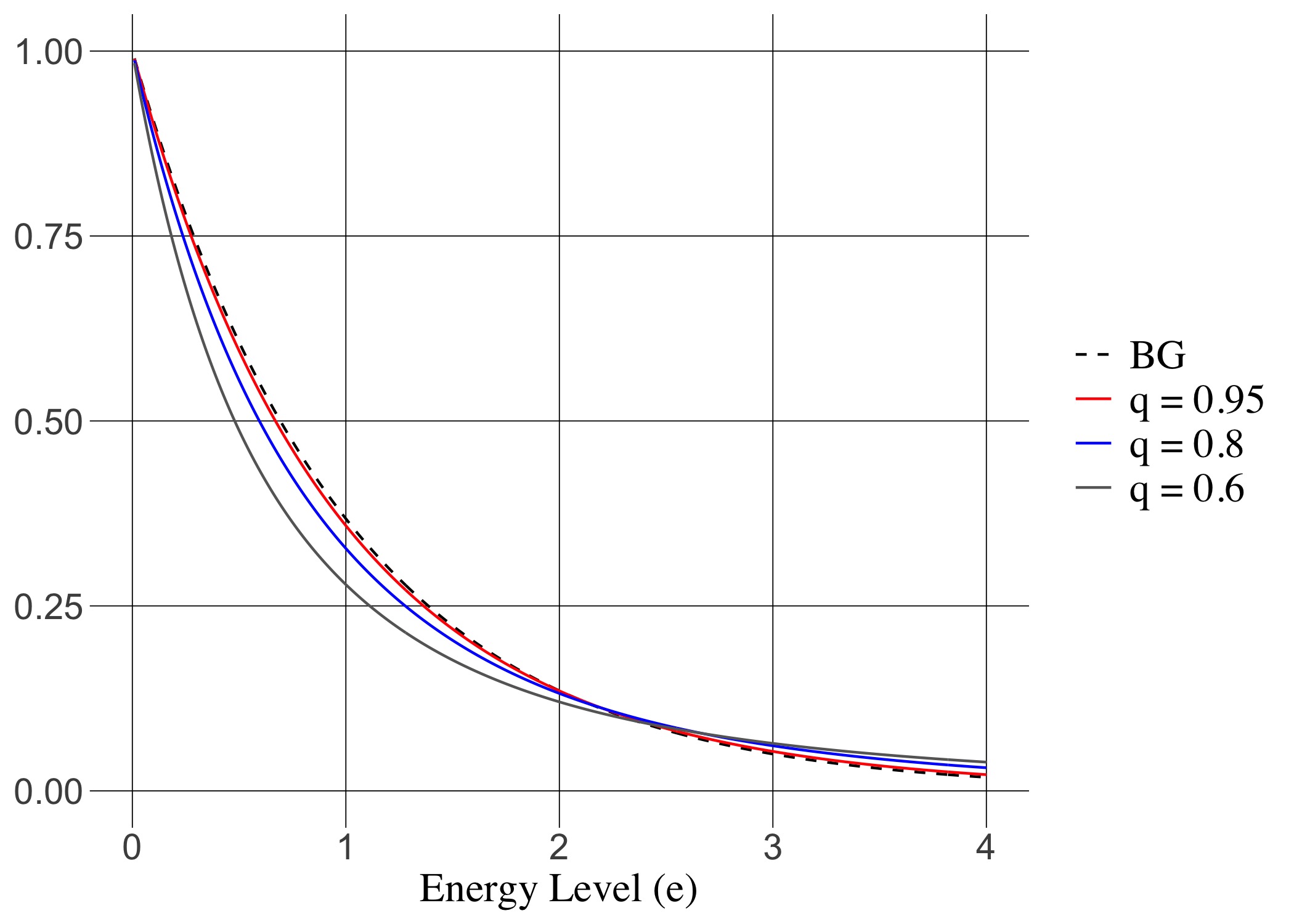}
            \caption{Graphic depiction of $\rho(e,T)$ in (\ref{eqBGDistributionClassical2}) and (\ref{eqBGGeneralized}) with $q=0.6,0.8, 0.95$, for $k=T=1$.}
            \label{figBGGeneralized}
\end{figure}

\section{The harmonic oscillator}\label{secHarmonicOscillator}

To further illustrate the applicability of (\ref{maxProblem}), this section analyzes the thermodynamic properties of the harmonic oscillator, starting with the classical distribution obtained from (\ref{maxProblemBG}). Before doing so, however, I need some extra notation that will also be used in Sections \ref{secOneDimBox} and \ref{secElectronicPart}.

Let the energy spectrum \textit{without degeneracy} be $0=e_{1}<\ldots<e_{n}<\ldots<e_{N}=e_{\max}$ (notice that the inequalities are strict and the index ``$n$'' is adopted when the spectrum is written this way) and the degeneracy function be $g(e_{n})=\#\{1\leq i\leq W\mid e_{i}=e_{n}\}$, $1\leq n\leq N$. Also, let $e_{t}<U\leq e_{t+1}$, $t+1\leq N$.

Notice that (\ref{eqFirstBG}) and (\ref{eqBetaDiscrete}) can be gathered as
\begin{eqnarray}\label{eqGathered}
    \sum^{t}_{n=1}g(e_{n})\exp_{2-q}(\beta_{N} \vert E_{n}\vert )\vert E_{n}\vert =\sum^{N}_{n=t+1}g(e_{n})\exp_{2-q}(-\beta_{N} \vert E_{n}\vert )\vert E_{n}\vert 
\end{eqnarray}
for $q\in(0,1]$, $\beta_{N}\in\mathbb{R}$. Define the auxiliary functions $f_{t,q}:(-\infty,1/(1-q)U)\rightarrow (0,+\infty)$ and $g_{t,N,q}:(-1/(1-q)E_{t+1},+\infty)\rightarrow(0,+\infty)$ as
\begin{eqnarray*}
    f_{t,q}(x)&=&\sum^{t}_{n=1}g(e_{n})\exp_{2-q}(x \vert E_{n}\vert )\vert E_{n}\vert\\
    g_{t,N,q}(x)&=&\sum^{N}_{n=t+1}g(e_{n})\exp_{2-q}(-x \vert E_{n}\vert )\vert E_{n}\vert.
\end{eqnarray*}
Notice that (\ref{eqGathered}) can be written as $f_{t,q}(\beta_{N})=g_{t,N,q}(\beta_{N})$, $f_{t,q}(\cdot)$ is strictly increasing and $g_{t,N,q}(\cdot)$ is strictly decreasing, with
\begin{eqnarray*}
    \lim_{x\rightarrow -\infty}f_{t,q}(x)=\lim_{x\rightarrow +\infty}g_{t,N,q}(x)=0,
\end{eqnarray*}
and
\begin{eqnarray*}
    \lim_{x\rightarrow 1/(1-q)U}f_{t,q}(x)=\lim_{x\rightarrow -1/(1-q)E_{t+1}}g_{t,N,q}(x)=+\infty.
\end{eqnarray*}

Then, indeed, the Intermediate Value Theorem and the strict monotonicity of $f_{t,q}(\cdot)$ and $g_{t,N,q}(\cdot)$ imply the existence of a single $\beta_{N}$ that satisfies $f_{t,q}(\beta_{N})=g_{t,N,q}(\beta_{N})$. Also, 
\begin{eqnarray*}
    g_{t,N,q}(0)-f_{t,q}(0)=\sum^{N}_{n=1}g(e_{n})E_{n}.
\end{eqnarray*}
Therefore, if there is $N_{1}\geq t+1$ such that
\begin{eqnarray}\label{eqConditionBetaPositive}
    \sum^{N}_{n=1}g(e_{n})E_{n}>0,
\end{eqnarray}
for $N\geq N_{1}$, we can also conclude that $\beta_{N}>0$ for $N\geq N_{1}$. Next, notice that $g_{t,N+1,q}(x)>g_{t,N,q}(x)$, $x\geq0$. Suppose (\ref{eqConditionBetaPositive}) is valid and $0<\beta_{N+1}\leq\beta_{N}$, $N\geq N_{1}$. Then, $f_{t,q}(\beta_{N+1})\leq f_{t,q}(\beta_{N})$ and 
\begin{eqnarray*}
    f_{t,q}(\beta_{N})=g_{t,N,q}(\beta_{N})<g_{t,N+1,q}(\beta_{N})\leq g_{t,N+1,q}(\beta_{N+1})=f_{t,q}(\beta_{N+1}),
\end{eqnarray*}
absurd. We conclude that if (\ref{eqConditionBetaPositive}) is valid, then $\{\beta_{N}\}_{N\geq N_{1}}$ is a strictly increasing sequence. If $q\in(0,1)$, then $\beta_{N}<1/(1-q)U$ and, therefore, there is $\beta>0$ satisfying
\begin{eqnarray*}
    \lim_{N\rightarrow\infty}\beta_{N}=\beta\leq\frac{1}{(1-q)U}.
\end{eqnarray*}
Define, then,
\begin{eqnarray*}
    \delta_{N}=\frac{1}{(1-q)U}-\beta_{N}
\end{eqnarray*}
for $N\geq N_{1}$. Then, $\{\delta_{N}\}_{N\geq N_{1}}$ is a strictly decreasing sequence with
\begin{eqnarray*}
    \lim_{N\rightarrow \infty}\delta_{N}=\delta=\frac{1}{(1-q)U}-\beta\geq0.
\end{eqnarray*}
Also,
\begin{eqnarray}\label{eqDeltaNBetaN}
    (1-q)U\delta_{N}&=&1-(1-q)U\beta_{N},
\end{eqnarray}
so that
\begin{eqnarray*}
    \exp_{2-q}(-\beta E_{n})=((1-q)U\delta_{N}+(1-q)e_{n}\beta_{N})^{\frac{1}{q-1}},
\end{eqnarray*}
for $n\geq1$. Therefore, (\ref{eqProbDiscrete1}) implies
\begin{eqnarray}
    p^{N}_{*n}=\biggr(\frac{U\delta_{N}}{U\delta_{N}+e_{n}\beta_{N}}\biggr)^{\frac{1}{1-q}}p^{N}_{*1},
\end{eqnarray}
for $2\leq n\leq N$, with $p^{N}_{*n}$ representing the probability of a single microstate of energy $e_{n}$ when the energy spectrum extends to $e_{N}$. It is also convenient to define
\begin{eqnarray*}
    r^{N}_{n}=\frac{p^{N}_{*n}}{p^{N}_{*1}}=\biggr(\frac{U\delta_{N}}{U\delta_{N}+e_{n}\beta_{N}}\biggr)^{\frac{1}{1-q}}
\end{eqnarray*}
for $1\leq n\leq N$. Therefore, $r^{N}_{n}>r^{N}_{n+1}>0$, $1\leq n\leq N-1$, and $0<r^{N+1}_{n}<r^{N}_{n}$, $2\leq n\leq N$, with their limit given by
\begin{eqnarray}\label{eqRLim}
    \lim_{N\rightarrow\infty}r^{N}_{n}=r_{n}=
    \biggr(\frac{U\delta}{U\delta+e_{n}\beta}\biggr)^{\frac{1}{1-q}}
\end{eqnarray}
for $n\geq2$

Finally, for $q=1$, suppose 
\begin{eqnarray}\label{eqConditionQ1}
    \lim_{N\rightarrow\infty}g_{t,N,1}(x)=\lim_{N\rightarrow\infty}\sum^{N}_{n=t+1}g(e_{n})\exp(-x \vert E_{n}\vert )\vert E_{n}\vert<+\infty.
\end{eqnarray}
for $x>0$. Then, if $\lim_{N\rightarrow+\infty}\beta_{N}=+\infty$, we have
\begin{eqnarray*}
    +\infty>\lim_{N\rightarrow+\infty}g_{t,N,1}(\beta_{N_{1}})>\lim_{N\rightarrow+\infty}g_{t,N,1}(\beta_{N})=\lim_{N\rightarrow+\infty}f_{t,1}(\beta_{N})=+\infty,
\end{eqnarray*}
absurd. Therefore, for $q=1$, if (\ref{eqConditionQ1}) is valid, one can also conclude that there is $\lim_{N\rightarrow\infty}\beta_{N}=\beta>0$.

\subsection{The Boltzmann-Gibbs distribution for the harmonic oscillator}\label{subsec1HarmonicOsc}

The energy spectrum of the harmonic oscillator\footnote{This is actually a shifted version of the energy spectrum so that $e_{1}=0$.} is $e_{n}=(n-1)\hbar\omega$, $\hbar\omega>0$, $n\geq1$ and the degeneracy function is $g(e_{n})=1$, $n\geq1$. Let $e_{t}<U\leq e_{t+1}$, $t\geq1$. For $N\geq t+1$, consider (\ref{maxProblemBG}) with only the first $N$ energy values $0=e_{1}<\ldots<e_{N}=(N-1)\hbar\omega$. Notice that (\ref{eqConditionQ1}) is valid, since
\begin{eqnarray*}
    \lim_{N\rightarrow\infty}\sum^{N}_{n=t+1}\exp(-x((n-1)\hbar\omega-U))((n-1)\hbar\omega-U)<+\infty,
\end{eqnarray*}
and we have $\lim_{N\rightarrow}\beta_{N}=\beta>0$. Assuming adequate convergence of (\ref{eqFirstBG}), we have
\begin{eqnarray*}
    0=\lim_{N\rightarrow\infty}\sum^{N}_{n=1}\exp(-\beta_{N}E_{n})E_{n}=-\frac{d}{d\beta}\sum^{\infty}_{n=1}\exp(-\beta E_{n})=-\frac{d}{d\beta}\biggr(\frac{\exp(\beta U)}{1-\exp(-\beta\hbar\omega)}\biggr),
\end{eqnarray*}
so that
\begin{eqnarray*}
    U=\frac{\hbar\omega\exp(-\beta\hbar\omega)}{1-\exp(-\beta\hbar\omega)}\implies \beta=\frac{1}{\hbar\omega}\log\biggr(\frac{\hbar\omega+U}{U}\biggr).
\end{eqnarray*}
Once the structural parameter $\beta>0$ is calculated, the probabilities are given by (\ref{eqBGDistribution1}) as
\begin{eqnarray*}
    p_{*n}=\exp(-\beta\hbar\omega(n-1))-\exp(-\beta\hbar\omega n),
\end{eqnarray*}
for $n\geq1$. Finally, the temperature relates to $\beta$ and $U$ as
\begin{eqnarray}\label{eqTBetaOscillator}
    T=\frac{1}{k\beta}=\biggr(\frac{k}{\hbar\omega}\log\biggr(\frac{\hbar\omega+U}{U}\biggr)\biggr)^{-1}.
\end{eqnarray}

\subsection{The generalized Boltzmann-Gibbs distribution for the harmonic oscillator}\label{subsec2HarmonicOsc}
Let $q\in(0,1)$ and notice that 
\begin{eqnarray*}
    \sum^{N}_{n=1}g(e_{n})E_{n}=\sum^{N}_{n=1}((n-1)\hbar\omega-U)=\frac{N(N+1)\hbar\omega}{2}-(\hbar\omega+U)N
\end{eqnarray*}
which is positive for $N>>1$. Therefore, (\ref{eqConditionBetaPositive}) is valid for some $N_{1}>>1$ and we have $\{\beta_{N}\}_{N\geq N_{1}}$ strictly increasing with
\begin{eqnarray*}
    0<\lim_{N\rightarrow\infty}\beta_{N}=\beta\leq \frac{1}{(1-q)U}.
\end{eqnarray*}
Next, we have
\begin{eqnarray}\label{eqGDiverge}
    g_{t,N,q}(1/(1-q)U)=\sum^{N}_{n=k+1}\biggr(\frac{(n-1)\hbar\omega}{U}\biggr)^\frac{1}{q-1}((n-1)\hbar\omega-U) \label{eqGDiverge}.
\end{eqnarray}
If $q \leq1/2$, the fact that $g_{t,N,q}(\cdot)$ is decreasing implies that
\begin{eqnarray*}
    \lim_{N\rightarrow\infty}f_{t,q}(\beta_{N})=\lim_{N\rightarrow\infty}g_{t,N,q}(\beta_{N})\geq \lim_{N\rightarrow\infty}g_{t,N,q}(1/(1-q)U)=+\infty,
\end{eqnarray*}
where the last equality is derived from (\ref{eqGDiverge}) and the fact that $q\leq1/2$ implies $q/(1-q)\leq1$. Therefore, if $q\leq 1/2$, then
\begin{eqnarray}\label{eqOscillator1}
    \beta=\frac{1}{(1-q)U}.
\end{eqnarray}
Assuming $\lim_{N\rightarrow+\infty}p^{N}_{*1}=p_{*1}$, (\ref{eqRLim}) implies
\begin{eqnarray}\label{eqOscillatorRn}
    \lim_{N\rightarrow+\infty}p^{N}_{*n}=p_{*n}=r_{n}p_{*1}.
\end{eqnarray}
If $\delta=0$ (i.e., $\beta=1/(1-q)U$), then (\ref{eqRLim}) implies $r_{n}=0$ and $p_{*n}=0$, $n\geq2$. Suppose, then, $\delta>0$ (i.e., $\beta<1/(1-q)U$). Notice that (\ref{eqBetaDiscrete}) can also be written as
\begin{eqnarray*}
   U=\sum^{N}_{n=1}p^{N}_{*n}e_{n}=p^{N}_{*1}\sum^{N}_{n=1}r^{N}_{n}(n-1)\hbar\omega.
\end{eqnarray*}
Assuming the appropriate convergence, we have 
\begin{eqnarray*}
\lim_{N\rightarrow\infty}p^{N}_{*1}\sum^{N}_{n=1}r^{N}_{n}(n-1)\hbar\omega=p_{*1}\sum^{\infty}_{n=1}r_{n}(n-1)\hbar\omega=U,
\end{eqnarray*}
and
\begin{eqnarray*}
\lim_{N\rightarrow\infty}\sum^{N}_{n=1}p_{*n}^{N}=\lim_{N\rightarrow\infty}p_{*1}^{N}\sum^{N}_{n=1}r_{n}^{N}=p_{*1}\sum^{\infty}_{n=1}r_{n}=1.
\end{eqnarray*}
Therefore,
\begin{eqnarray*}
    \frac{\sum^{\infty}_{n=1}r_{n}(n-1)\hbar\omega}{\sum^{\infty}_{n=1}r_{n}}=U.
\end{eqnarray*}
As a remark, note that
\begin{eqnarray*}
    r_{n}&\approx& \biggr(\frac{\delta U}{\beta\hbar\omega}\biggr)^{\frac{1}{1-q}}\frac{1}{n^{\frac{1}{1-q}}}\\
    r_{n}(n-1)&\approx&\biggr(\frac{\delta U}{\beta\hbar\omega}\biggr)^{\frac{1}{1-q}}\frac{1}{n^{\frac{q}{1-q}}},
\end{eqnarray*}
for $n>>1$, and to ensure the convergence of both series, we must have
\begin{eqnarray}\label{eqOscillator2}
    1<\frac{q}{1-q}<\frac{1}{1-q}\implies q>\frac{1}{2}.
\end{eqnarray}

Then, (\ref{eqOscillator1}) allows us to state that $q\leq 1/2$ implies $\delta=0$ (i.e., $\beta=1/(1-q)U$), and (\ref{eqOscillator2}) indicates, although not formally proven, that $1/2<q<1$ implies $\delta>0$ (i.e., $\beta<1(1-q)U$).

Since $\beta_{N}>0$, $N>>1$, the temperature of the truncated system is positive and given by
\begin{eqnarray*}
    T_{N}^{q}=\frac{1}{q\beta_{N}k_{s}(\sum^{N}_{n=1}\exp_{2-q}(-\beta E_{n}))^{1-q}}=\frac{(p^{N}_{*1})^{1-q}(1-q)\delta_{N}U}{q\beta_{N}k_{s}}.
\end{eqnarray*}
Let $\sigma=1$ in (\ref{eqKeIntro}), so that
\begin{eqnarray*}
    k_{s}=k^{q}\biggr(\frac{e_{\max}}{W-1}\biggr)^{1-q}=k^{q}\biggr(\frac{(N-1)\hbar\omega}{N-1}\biggr)^{1-q}=k^{q}(\hbar\omega)^{1-q}.
\end{eqnarray*}
for $N\geq t+1$. The temperature of the harmonic oscillator is, therefore, given by 
\begin{eqnarray}\label{eqHarmonicTU}
    T=\lim_{N\rightarrow\infty}T_{N}=\frac{1}{k}\biggr(\frac{p_{*1}^{1-q}(1-(1-q)\beta U)}{q\beta (\hbar\omega)^{1-q}}\biggr)^{\frac{1}{q}},
\end{eqnarray}
which generalizes (\ref{eqTBetaOscillator}), with
\begin{eqnarray*}
    p_{*1}=\biggr(\sum^{\infty}_{n=1}r_{n}\biggr)^{-1}
\end{eqnarray*}
and 
\begin{eqnarray*}
    \frac{\sum^{\infty}_{n=1}r_{n}(n-1)\hbar\omega}{\sum^{\infty}_{n=1}r_{n}}=U.
\end{eqnarray*}
Figure \ref{figHarmonicOscillatorUT} depicts the generalized thermodynamic behavior of the harmonic oscillator.

\begin{figure}[h]
            \centering
            \includegraphics[width=0.6\linewidth]{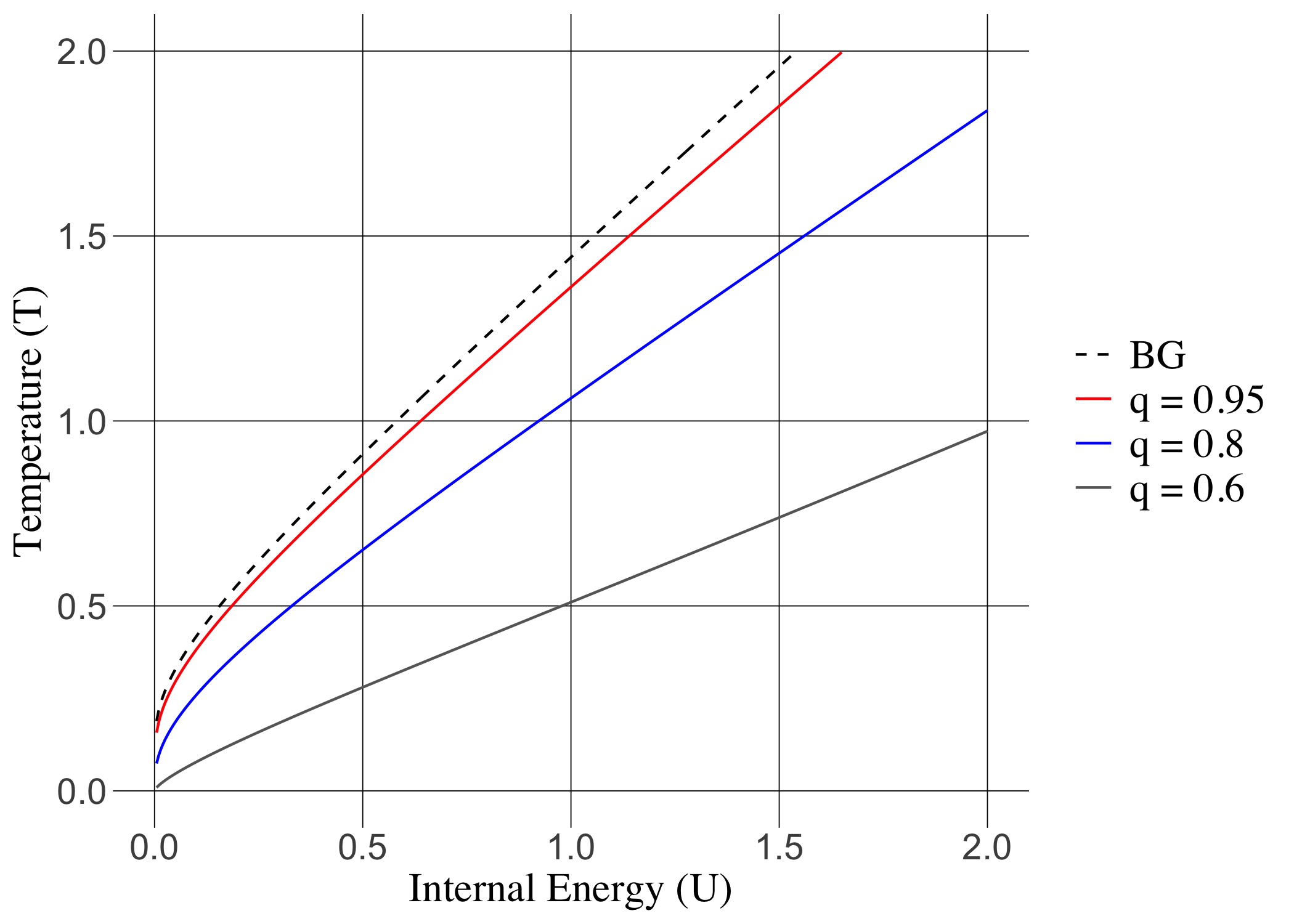}
            \caption{Graphic depiction of the temperature and the internal energy for the harmonic oscillator using Boltzmann-Gibbs entropy and $S_{q,s}$ with $q=0.6,0.8, 0.95$, for $k=\hbar\omega=1$.}
            \label{figHarmonicOscillatorUT}
\end{figure}

\section{The one-dimensional box}\label{secOneDimBox}

To further illustrate the applicability of (\ref{maxProblem}), I will analyze the thermodynamic properties of a one-dimensional box (i.e., the trapped particle).

\subsection{The Boltzmann-Gibbs distribution for the one-dimensional box}

The energy spectrum of the one-dimensional box\footnote{This is actually a shifted version of the energy spectrum so that $e_{1}=0$.} is $e_{n}=(n^{2}-1)\hbar^{2}\pi^{2}/2mL^{2}=(n^{2}-1)\gamma$, $\gamma>0$, $n\geq1$, and the degeneracy function is $g(e_{n})=1$, $n\geq1$. Let $e_{t}< U \leq e_{t+1}$, $t\geq1$. For $N\geq t+1$, consider (\ref{maxProblemBG}) with only the first $N$ energy values $0=e_{1}<\ldots<e_{N}=(N-1)\gamma$. Notice that (\ref{eqConditionQ1}) is valid, since
\begin{eqnarray*}
    \lim_{N\rightarrow\infty}\sum^{N}_{n=t+1}\exp(-x((n^{2}-1)\gamma-U))((n^{2}-1)\gamma-U)<+\infty,
\end{eqnarray*}
and we have $\lim_{N\rightarrow}\beta_{N}=\beta>0$.  Assuming adequate convergence of (\ref{eqFirstBG}), we have 
\begin{eqnarray}\label{eqBoxBG}
    \frac{\sum^{\infty}_{n=1}\exp(-\beta(n^{2}-1)\gamma)}{\sum^{\infty}_{n=1}n^{2}\exp(-\beta (n^{2}-1)\gamma)}=\frac{e}{U+e}.
\end{eqnarray}
Once the structural parameter $\beta>0$ is calculated through (\ref{eqBoxBG}), the probabilities are given by
\begin{eqnarray*}
    p_{*n}&=&\frac{\exp(-\beta (e_{n}-U))}{\sum^{\infty}_{i=1}\exp(-\beta(e_{i}-U))}\\
\end{eqnarray*}
for $n\geq1$. Finally, the temperature relates to $\beta$ as $T=1/k\beta$.

\subsection{The generalized Boltzmann-Gibbs distribution for the one-dimensional box}

Let $q\in(0,1)$ and notice that
\begin{eqnarray*}
    \sum^{N}_{n=1}g(e_{n})E_{n}=\sum^{N}_{n=1}((n^{2}-1)\gamma-U)=\frac{N(N+1)(2N+1)\gamma}{2}-(\gamma+U)N
\end{eqnarray*}
which is positive for $N>>1$. Therefore, (\ref{eqConditionBetaPositive}) is valid for some $N_{1}>>1$ and we have $\{\beta_{N}\}_{N\geq N_{1}}$ strictly increasing with
\begin{eqnarray*}
    0<\lim_{N\rightarrow\infty}\beta_{N}=\beta\leq \frac{1}{(1-q)U}.
\end{eqnarray*}
Next, we have
\begin{eqnarray}\label{eqGDiverge2}
    g_{t,N,q}(1/(1-q)U)=\sum^{N}_{n=k+1}\biggr(\frac{(n^{2}-1)\gamma}{U}\biggr)^\frac{1}{q-1}((n^{2}-1)\gamma-U).
\end{eqnarray}
If $q \leq1/3$, the fact that $g_{t,N,q}(\cdot)$ is decreasing implies that
\begin{eqnarray*}
    \lim_{N\rightarrow\infty}f_{t,q}(\beta_{N})=\lim_{N\rightarrow\infty}g_{t,N,q}(\beta_{N})\geq \lim_{N\rightarrow\infty}g_{t,N,q}(1/(1-q)U)=+\infty,
\end{eqnarray*}
where the last equality is derived from (\ref{eqGDiverge2}) and the fact that $q\leq1/3$ implies $2q/(1-q)\leq1$. Therefore, if $q\leq 1/3$, then
\begin{eqnarray}\label{eqBox1}
    \beta=\frac{1}{(1-q)U}.
\end{eqnarray}
Assuming $\lim_{N\rightarrow+\infty}p^{N}_{*1}=p_{*1}$, (\ref{eqRLim}) implies
\begin{eqnarray}\label{eqBoxRn}
    \lim_{N\rightarrow+\infty}p^{N}_{*n}=p_{*n}=r_{n}p_{*1}.
\end{eqnarray}
If $\delta=0$ (i.e., $\beta=1/(1-q)U$), then (\ref{eqRLim}) implies $r_{n}=0$ and $p_{*n}=0$, $n\geq2$. Suppose, then, $\delta>0$ (i.e., $\beta<1/(1-q)U$). Notice that (\ref{eqBetaDiscrete}) can also be written as
\begin{eqnarray*}
   U=\sum^{N}_{n=1}p^{N}_{*n}e_{n}=p^{N}_{*1}\sum^{N}_{n=1}r^{N}_{n}(n^{2}-1)\gamma.
\end{eqnarray*}
Assuming the appropriate convergence, we have 
\begin{eqnarray*}
\lim_{N\rightarrow\infty}p^{N}_{*1}\sum^{N}_{n=1}r^{N}_{n}(n^{2}-1)\gamma=p_{*1}\sum^{\infty}_{n=1}r_{n}(n^{2}-1)\gamma=U,
\end{eqnarray*}
and
\begin{eqnarray*}
\lim_{N\rightarrow\infty}\sum^{N}_{n=1}p_{*n}^{N}=\lim_{N\rightarrow\infty}p_{*1}^{N}\sum^{N}_{n=1}r_{n}^{N}=p_{*1}\sum^{\infty}_{n=1}r_{n}=1.
\end{eqnarray*}
Therefore,
\begin{eqnarray*}
    \frac{\sum^{\infty}_{n=1}r_{n}(n^2-1)\gamma}{\sum^{\infty}_{n=1}r_{n}}=U.
\end{eqnarray*}
As a remark, note that
\begin{eqnarray*}
    r_{n}&\approx& \biggr(\frac{\delta U}{\beta e}\biggr)^{\frac{1}{1-q}}\frac{1}{n^{\frac{2}{1-q}}}\\
    r_{n}(n^{2}-1)&\approx&\biggr(\frac{\delta U}{\beta e}\biggr)^{\frac{1}{1-q}}\frac{1}{n^{\frac{2q}{1-q}}},
\end{eqnarray*}
for $n>>1$, and to ensure the convergence of both series, we must have
\begin{eqnarray}\label{eqBox2}
    1<\frac{2q}{1-q}<\frac{2}{1-q}\implies q>\frac{1}{3}.
\end{eqnarray}
Then, (\ref{eqBox1}) allows us to state that $q\leq 1/3$ implies $\delta=0$ (i.e., $\beta=1/(1-q)U$), and (\ref{eqBox2}) indicates, although not formally proven, that $1/3<q<1$ implies $\delta>0$ (i.e., $\beta<1(1-q)U$).

Since $\beta_{N}>0$, $N>>1$, the temperature of the truncated system is positive and given by
\begin{eqnarray*}
    T_{N}^{q}=\frac{1}{q\beta_{N}k_{s}(\sum^{N}_{n=1}\exp_{2-q}(-\beta E_{n}))^{1-q}}=\frac{(p^{N}_{*1})^{1-q}(1-q)\delta_{N}U}{q\beta_{N}k_{s}}.
\end{eqnarray*}
Let $\sigma=2$ in (\ref{eqKeIntro}), so that
\begin{eqnarray*}
    k_{s}=k^{q}\biggr(\frac{e_{\max}}{W^{2}-1}\biggr)^{1-q}=k^{q}\biggr(\frac{(N^{2}-1)\gamma}{N^{2}-1}\biggr)^{1-q}=k^{q}\gamma^{1-q}.
\end{eqnarray*}
for $N\geq t+1$. The temperature of the one dimensional box is, therefore, given by 
\begin{eqnarray}\label{eqBoxTU}
    T=\lim_{N\rightarrow\infty}T_{N}=\frac{1}{k}\biggr(\frac{p_{*1}^{1-q}(1-(1-q)\beta U)}{q\beta \gamma^{1-q}}\biggr)^{\frac{1}{q}}
\end{eqnarray}
with
\begin{eqnarray*}
    p_{*1}=\biggr(\sum^{\infty}_{n=1}r_{n}\biggr)^{-1}
\end{eqnarray*}
and 
\begin{eqnarray*}
    \frac{\sum^{\infty}_{n=1}r_{n}(n^{2}-1)\gamma}{\sum^{\infty}_{n=1}r_{n}}=U.
\end{eqnarray*}
Figure \ref{figBoxTrap} depicts the generalized thermodynamic behavior of the harmonic oscillator.

\begin{figure}[h]
            \centering
            \includegraphics[width=0.6\linewidth]{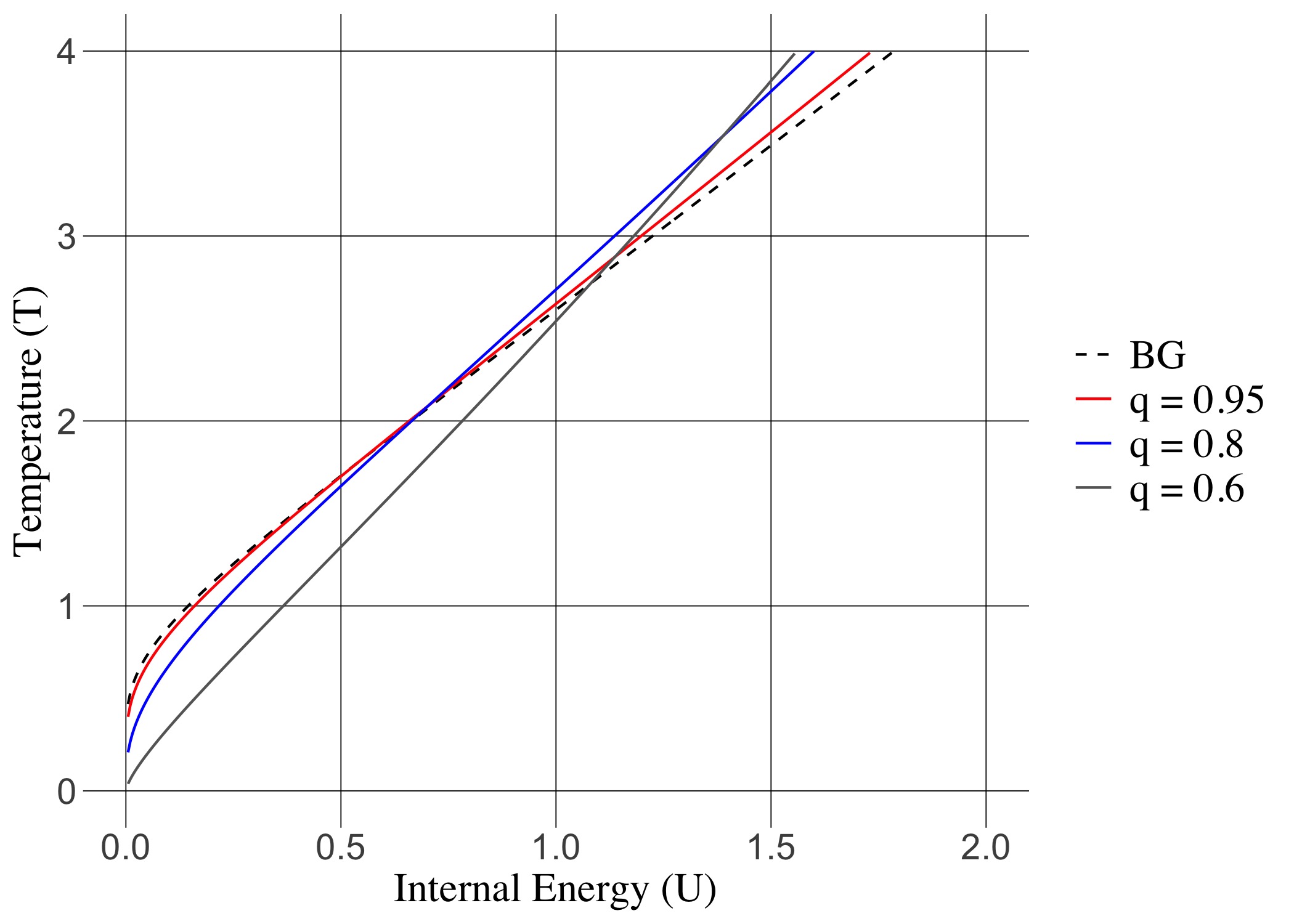}
            \caption{Graphic depiction of the temperature and the internal energy for the one-dimensional box using Boltzmann-Gibbs entropy and $S_{q,s}$ with $q=0.6,0.8, 0.95$, for $k=\hbar^{2}\pi^{2}/2mL^{2}=1$.}
            \label{figBoxTrap}
\end{figure}

\section{The Electronic Partition Function Paradox}\label{secElectronicPart}

The energy spectrum of the hydrogen atom in free space is given by 
\begin{eqnarray*}
    e_{n}=\biggr(1-\frac{1}{n^{2}}\biggr)e_{\text{ion}},
\end{eqnarray*}
for $n\geq1$, with $e_{ion}=13.6\text{ eV}$ the ionization energy, and the degeneracy function is $g(e_{n})=2n^{2}$, $n\geq1$. Let $e_{t}<U\leq e_{t+1}$, $t\geq1$, and $q\in(0,1)$. For $N\geq t+1$, consider (\ref{maxProblemBG}) with only the first $N$ energy values $0=e_{1}<\ldots<e_{N}=(1-\frac{1}{N^{2}})e_{\text{ion}}$. Notice that
\begin{eqnarray*}
    \sum^{N}_{n=1}g(e_{n})E_{n}=\sum^{N}_{n=1}2n^{2}\biggr(\biggr(1-\frac{1}{n^{2}}\biggr)e_{ion}-U\biggr)=\sum^{N}_{n=1}2n^{2}(e_{ion}-U)-2Ne_{ion}
\end{eqnarray*}
which is positive for $N>>1$. Therefore, (\ref{eqConditionBetaPositive}) is valid for some $N_{1}>>1$ and we have $\{\beta_{N}\}_{N\geq N_{1}}$ strictly increasing with
\begin{eqnarray*}
    0<\lim_{N\rightarrow\infty}\beta_{N}=\beta\leq \frac{1}{(1-q)U}.
\end{eqnarray*}
Next, notice that
\begin{eqnarray*}
    g_{t,N,q}(x)=\sum^{N}_{n=t+1}2n^{2}\exp_{2-q}(-x\vert E_{n}\vert)\vert E_{n}\vert\geq\sum^{N}_{n=t+1}2n^{2}\exp_{2-q}(-x)\vert E_{t+1}\vert
\end{eqnarray*}
for $x>0$. Therefore,
\begin{eqnarray*}
    \lim_{N\rightarrow\infty}f_{t,q}(\beta_{N})=\lim_{N\rightarrow\infty}g_{t,N,q}(\beta_{N})\geq\lim_{N\rightarrow\infty}g_{t,N,q}(\beta_{N_{1}})=+\infty,
\end{eqnarray*}
and this implies
\begin{eqnarray*}
    \beta=\frac{1}{(1-q)U}.
\end{eqnarray*}
Next, (\ref{eqBetaDiscrete}) allows us to write
\begin{eqnarray*}
    \sum^{N}_{n=1}2n^{2}\exp_{2-q}(-\beta_{N}E_{n})E_{n}=0 \implies \sum^{N}_{n=1}n^{2}\exp_{2-q}(-\beta_{N}E_{n})\frac{(e_{ion}-U)}{e_{ion}}=\sum^{N}_{n=1}\exp_{2-q}(-\beta_{N}E_{n}).
\end{eqnarray*}
Therefore, (\ref{eqProbDiscrete1}) and (\ref{eqDeltaNBetaN}) imply
\begin{eqnarray}\label{eqP1Hydrogen}
    g(e_{1})p^{N}_{*1}=\frac{2\exp_{2-q}(\beta_{N}U)}{\sum^{N}_{n=1}2n^{2}\exp_{2-q}(-\beta_{N}E_{n})}
    =\frac{e_{ion}-U}{e_{ion}}\frac{1}{1+(U\delta_{N})^{\frac{1}{1-q}}\sum^{N}_{n=2}(U\delta_{N}+e_{n}\beta_{N})^{\frac{1}{q-1}}}.
\end{eqnarray}
I proceed with the following lemma.
\begin{lemma}\label{lemmaEletronicPartition}
    Let $f_{t,q}(\cdot)$, $g_{t,N,q}(\cdot)$, $N\geq t+1$, and $\{\delta_{N}\}_{N\geq N_{1}}$ be as defined above. Then,
    \begin{eqnarray*}
        \lim_{N\rightarrow\infty}N^{3(1-q)}\delta_{N}=\frac{3^{1-q}e_{ion}}{2^{1-q}(1-q)(e_{ion}-U)^{1-q}U^{1+q}}.
    \end{eqnarray*}
\end{lemma}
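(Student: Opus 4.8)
The plan is to read off the decay rate of $\delta_N$ from the defining equation for $\beta_N$ in its balanced form (\ref{eqGathered}), $f_{t,q}(\beta_N)=g_{t,N,q}(\beta_N)$, by matching the leading-order divergences of the two sides as $N\to\infty$. Since we have already established $\lim_N\beta_N=\beta=\frac{1}{(1-q)U}$, equivalently $\delta_N\downarrow 0$, both sides diverge, but for structurally different reasons, and the rate of $\delta_N$ is exactly what reconciles them.

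For the left-hand side I would isolate the ground-state term. With $|E_1|=U$ and $g(e_1)=2$, relation (\ref{eqDeltaNBetaN}) gives $\exp_{2-q}(\beta_N U)=[(1-q)U\delta_N]^{-1/(1-q)}$, so the $n=1$ term equals $2U[(1-q)U]^{-1/(1-q)}\delta_N^{-1/(1-q)}$ and blows up like $\delta_N^{-1/(1-q)}$, precisely because $\beta_N\uparrow \frac{1}{(1-q)U}$ is the pole of $\exp_{2-q}(\beta_N U)$. Every other term of $f_{t,q}$ (for $2\le n\le t$) has its pole at $\frac{1}{(1-q)(U-e_n)}>\frac{1}{(1-q)U}$ and therefore stays bounded, so $f_{t,q}(\beta_N)\sim 2U[(1-q)U]^{-1/(1-q)}\delta_N^{-1/(1-q)}$.

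For the right-hand side $g_{t,N,q}(\beta_N)=\sum_{n=t+1}^{N}2n^{2}\exp_{2-q}(-\beta_N(e_n-U))(e_n-U)$, the summand tends, for large $n$, to $2n^{2}(e_{\text{ion}}-U)\exp_{2-q}(-\beta(e_{\text{ion}}-U))$ with $\exp_{2-q}(-\beta(e_{\text{ion}}-U))=(e_{\text{ion}}/U)^{1/(q-1)}=(U/e_{\text{ion}})^{1/(1-q)}$, since $\beta_N\to\beta$ and $e_n\to e_{\text{ion}}$. Using $\sum_{n\le N}n^{2}\sim N^{3}/3$ this yields $g_{t,N,q}(\beta_N)\sim \tfrac{2}{3}(e_{\text{ion}}-U)(U/e_{\text{ion}})^{1/(1-q)}N^{3}$. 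Equating the two leading terms gives the balance $[(1-q)U]^{-1/(1-q)}\delta_N^{-1/(1-q)}\sim \tfrac{1}{3U}(e_{\text{ion}}-U)(U/e_{\text{ion}})^{1/(1-q)}N^{3}$; this fixes the power $\delta_N\asymp N^{-3(1-q)}$, and raising to the exponent $-(1-q)$ produces the constant as the ratio of the two coefficients, which is the displayed limit.

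The hard part will be turning the tail estimate into a rigorous statement. Three issues need care: (i) replacing $\beta_N$ by $\beta$ and $e_n$ by $e_{\text{ion}}$ inside the summand must alter the $N^{3}$ coefficient only at lower order --- the $\delta_N$-dependence entering through $\beta_N=\beta-\delta_N$ contributes a relative perturbation of size $O(\delta_N)$, hence a total error $O(\delta_N N^{3})=O(N^{3q})=o(N^{3})$, so a self-consistent bootstrap (assume $\delta_N\to 0$, derive $\delta_N\asymp N^{-3(1-q)}$, then verify) is required; (ii) the Riemann-sum replacement $\sum_{n\le N}n^{2}(\cdots)\sim N^{3}/3$ must be controlled with an explicit $O(N^{2})$ error, and the correction from $e_n=(1-1/n^{2})e_{\text{ion}}$ shown to enter only at $O(N)$; and (iii) the bounded excited-state terms of $f_{t,q}$ and the $o(N^{3})$ part of the tail must be confirmed negligible against the leading balance. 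An equivalent and somewhat cleaner route replaces $f_{t,q}=g_{t,N,q}$ by the exact identity behind (\ref{eqP1Hydrogen}), $\sum_n n^{2}\exp_{2-q}(-\beta_N E_n)=\tfrac{e_{\text{ion}}}{e_{\text{ion}}-U}\sum_n\exp_{2-q}(-\beta_N E_n)$: the ground-state divergence $\delta_N^{-1/(1-q)}$ appears in both partition-type sums and, after subtraction, is balanced against the $N^{3}$ tail of the $n^{2}$-weighted sum, isolating the same dominant balance.
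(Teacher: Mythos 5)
Your strategy is the same as the paper's: isolate the ground-state term of $f_{t,q}(\beta_N)$, which diverges like $\delta_N^{-1/(1-q)}$ because $\beta_N$ approaches the pole at $1/(1-q)U$, show that $g_{t,N,q}(\beta_N)\sim \tfrac{2}{3}(e_{\mathrm{ion}}-U)(U/e_{\mathrm{ion}})^{1/(1-q)}N^{3}$, and extract the rate of $\delta_N$ from the balance. The technical points you list under (i)--(iii) are exactly the ones the paper handles: it controls the replacement of $U\delta_N+\beta_N e_n$ by $\beta e_{\mathrm{ion}}$ via the perturbation $\sigma(n,N)$ and an explicit integral (mean-value) bound on $(\beta e_{\mathrm{ion}}+\sigma)^{1/(q-1)}-(\beta e_{\mathrm{ion}})^{1/(q-1)}$, and shows the unweighted sum $\sum_{n}(U\delta_N+\beta_N e_n)^{1/(q-1)}$ is $o(N^3)$; note that for this $\varepsilon$-argument the previously established fact $\delta_N\downarrow 0$ suffices, so no self-consistent bootstrap is actually needed.

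There is, however, one concrete point you glossed over: your own balance does \emph{not} produce the displayed constant. From $2U[(1-q)U]^{-1/(1-q)}\delta_N^{-1/(1-q)}\sim \tfrac{2}{3}(e_{\mathrm{ion}}-U)(U/e_{\mathrm{ion}})^{1/(1-q)}N^{3}$ the two factors of $2$ (one from $g(e_1)=2$, one from $\sum 2n^2\sim \tfrac{2}{3}N^3$) cancel, and raising to the power $-(1-q)$ gives
\begin{equation*}
\lim_{N\rightarrow\infty}N^{3(1-q)}\delta_{N}=\frac{3^{1-q}e_{ion}}{(1-q)(e_{ion}-U)^{1-q}U^{1+q}},
\end{equation*}
i.e.\ the stated limit multiplied by $2^{1-q}$. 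The paper's appendix arrives at the extra $2^{1-q}$ only because its final chain of equalities writes $f(\beta_N)/N^{3}$ as $(1-q)^{-1/(1-q)}U^{-q/(1-q)}\delta_N^{-1/(1-q)}N^{-3}$, dropping the factor $2$ that its own equation (\ref{eqParadoxProof1}) carries in front of the leading term, while retaining the $2$ in the denominator of $\lim N^{3}/g_N(\beta_N)$. So either exhibit the cancellation explicitly and report the constant without $2^{1-q}$ (flagging the discrepancy with the statement), or identify where an extra $2^{1-q}$ would legitimately enter; asserting that the ratio of your two coefficients ``is the displayed limit'' is not correct as written.
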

Lemma \ref{lemmaEletronicPartition} implies that
\begin{eqnarray*}
    0<\lim_{N\rightarrow\infty}(U\delta_{N})^{\frac{1}{1-q}}\sum^{N}_{n=2}(U\delta_{N}+e_{n}\beta_{N})^{\frac{1}{q-1}}<\lim_{N\rightarrow\infty}\biggr(\frac{U\delta_{N}N^{3(1-q)}}{e_{2}\beta_{N_{1}}}\biggr)^{\frac{1}{1-q}}\frac{1}{N^{2}}=0,
\end{eqnarray*}
so that, by (\ref{eqP1Hydrogen}), we have
\begin{eqnarray}\label{eqLimitDistrN=1}
    g(e_{1})p_{*1}=\lim_{N\rightarrow\infty}g(e_{1})p^{N}_{*1}=1-\frac{U}{e_{ion}}.
\end{eqnarray}
Since $\delta=0$, (\ref{eqRLim}) implies $r_{n}=0$, $n\geq2$, so that
\begin{eqnarray}\label{eqLimitDistrN>1}
    \lim_{N\rightarrow\infty}g(e_{n})p^{N}_{*n}=0,
\end{eqnarray}
for $n\geq2$. Notice, however, that although (\ref{eqLimitDistrN>1}) implies that the limit of the probability of every energy level $n\geq2$ tends towards zero, this only means that the $U/e_{ion}>0$ remaining probability from (\ref{eqLimitDistrN=1}) is, when $N\rightarrow\infty$, moving towards the upper energy levels. One can represent this behavior by adding one last energy level $e_{\infty}=e_{ion}$ and assigning it a \textit{total probability} $p_{*\infty}\in(0,1)$ (in contrast to $p_{*n}$, $1\leq n<\infty$, which indicates the probability of a single microstate with energy $e_{n}$) of 
\begin{eqnarray*}
    p_{*\infty}=\frac{U}{e_{ion}}.
\end{eqnarray*}
Therefore, this two-tier probability distribution concentrates on the first and the ``outermost'' energy level $e_{ion}$, with the average energy given by
\begin{eqnarray*}
    g(e_{1})p_{*1}0+{p}_{*\infty}e_{ion}=U.
\end{eqnarray*}
Remarkably, this behavior of the limiting optimal probability distribution does not depend on the actual value of $0<q<1$. 

Next, the temperature of the truncated system is given by (\ref{eqFundamentalTBeta}) as
\begin{eqnarray}\label{eqTN}
    T_{N}^{q}&=&\frac{1}{qk_{s}\beta_{N}(\sum^{N}_{n=1}n^{2}\exp_{2-q}(-\beta_{N}  E_{n}))^{1-q}}=\frac{(p^{N}_{*1})^{1-q}U(1-q)}{q\beta_{N}}\frac{\delta_{N}}{k_{s}}.
\end{eqnarray}
Let $\sigma=1$ in (\ref{eqKeIntro}), so that
\begin{eqnarray*}
    k_{s}=k^q\biggr(\frac{e_{\max}}{W-1}\biggr)^{1-q}=k^{q}\biggr(\frac{(1-N^{-2})e_{ion}}{\sum^{N}_{n=1}2n^{2}-1}\biggr)^{1-q}=k^q\biggr(\frac{3(1-N^{-2})e_{ion}}{N(N+1)(2N+1)-3}\biggr)^{1-q},
\end{eqnarray*}
for $N\geq t+1$, so that, by Lemma \ref{lemmaEletronicPartition},
\begin{eqnarray*}
    \lim_{N\rightarrow\infty}\frac{\delta_{N}}{k_{s}}=\lim_{N\rightarrow\infty}\frac{N^{3(1-q)}\delta_{N}}{N^{3(1-q)}k_{s}}=\frac{e_{ion}^{q}}{k^{q}(1-q)(e_{ion}-U)^{1-q}U^{1+q}}.
\end{eqnarray*}
Therefore, (\ref{eqTN}) implies
\begin{eqnarray*}
    T^q=\lim_{N\rightarrow\infty}T_{N}^{q}=\frac{(1-q)U^{1-q}}{qk^{q}e_{ion}^{1-2q}},
\end{eqnarray*}
and we conclude that
\begin{eqnarray}\label{eqTUHydrogen}
    T=\biggr(\frac{1-q}{q}\biggr)^{\frac{1}{q}}\biggr(\frac{U}{e_{ion}}\biggr)^{\frac{1-q}{q}}\frac{e_{ion}}{k}.
\end{eqnarray}
Figure \ref{figUTHydrogen} depicts (\ref{eqTUHydrogen}) for different values of $q\in(0,1)$.
\begin{figure}[h]
            \centering
            \includegraphics[width=0.6\linewidth]{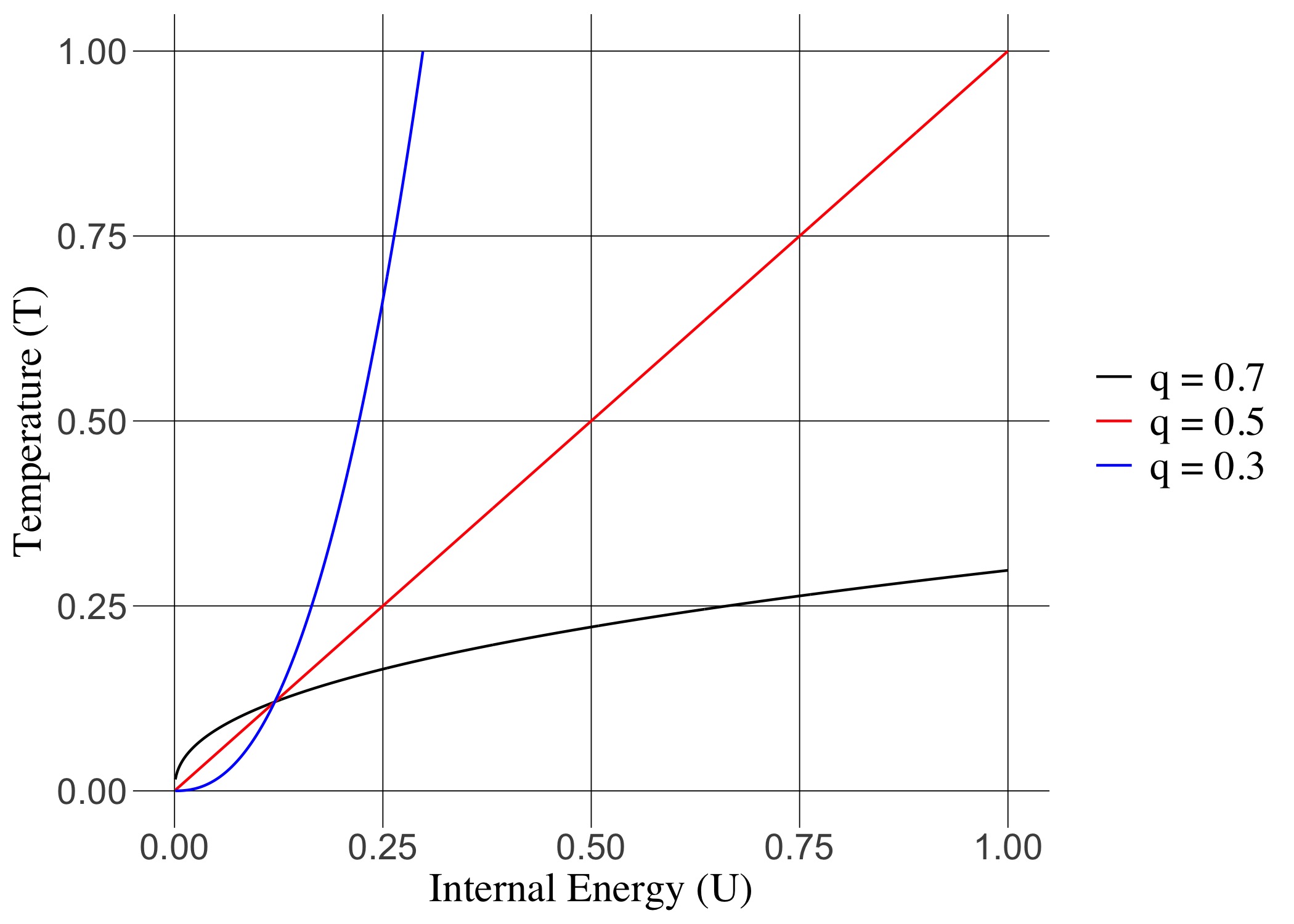}
            \caption{Graphic depiction of the temperature and the internal energy of the hydrogen atom in free space with $q=0.3,0.5,0.7$, for $k=e_{ion}=1$.}
            \label{figUTHydrogen}
\end{figure}

In particular, the critical ionization temperature $T_{ion}>0$ is related to $q\in(0,1)$ according to
\begin{eqnarray}\label{eqTion}
    T_{ion}=\biggr(\frac{1-q}{q}\biggr)^{\frac{1}{q}}\frac{e_{ion}}{k}.
\end{eqnarray}
and (\ref{eqTUHydrogen}) can also be written as
\begin{eqnarray}
    T=T_{ion}\biggr(\frac{U}{e_{ion}}\biggr)^{\frac{1-q}{q}}.
\end{eqnarray}
Figure \ref{figHydrogenIonization} depicts the order of magnitude of the critical ionization temperature given by (\ref{eqTion}).

\begin{figure}[h]
            \centering
            \includegraphics[width=0.6\linewidth]{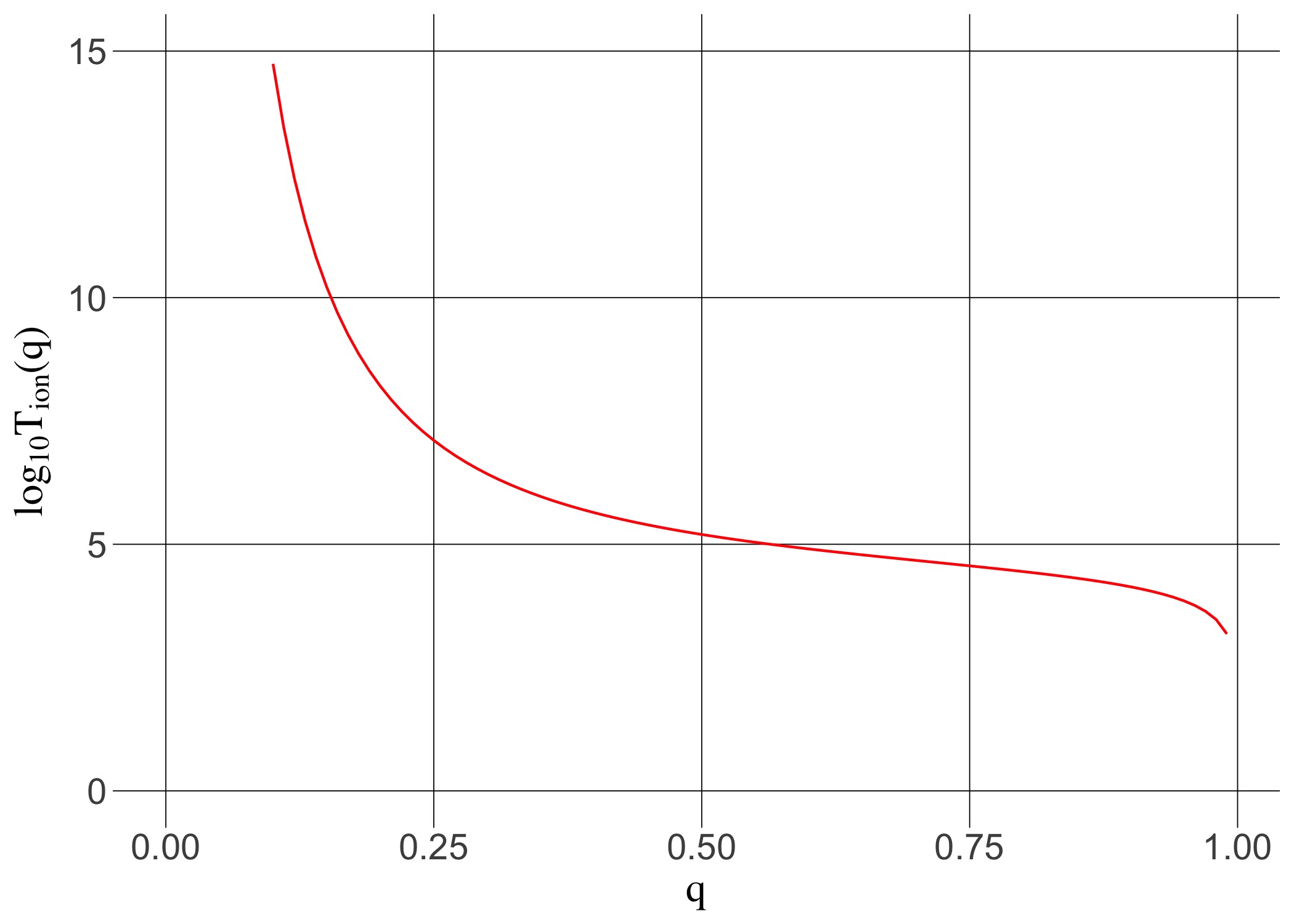}
            \caption{Graphic depiction of the logarithm (base 10) of the critical ionization temperature of the hydrogen atom in free space (i.e., $\log_{10}T_{ion}(q)$, $q\in(0,1)$).}
            \label{figHydrogenIonization}
\end{figure}

Although this theoretical framework admits several possible values for $q\in(0,1)$, it is worth highlighting that, for $q=1/2$, (\ref{eqTUHydrogen}) simplifies to
\begin{eqnarray}\label{eqTUFinalHydrogen}
    T=\frac{U}{k},
\end{eqnarray}
and the specific heat of the hydrogen atom in free space becomes, in this case, equal to the Boltzmann constant. Furthermore, (\ref{eqTUFinalHydrogen}) predicts for $q=0.5$ a critical ionization temperature given by
\begin{eqnarray}\label{eqTempIonizationEstimate}
    T_{ion}=\frac{e_{ion}}{k}= \frac{13.6\text{ eV}}{8.617\text{e-5 eV/K}}\approx1.6\text{e+5 K}.
\end{eqnarray}
Finally, to compare the order of magnitude of (\ref{eqTempIonizationEstimate}), we proceed in the following manner. Saha's equation for hydrogen is given by
\begin{eqnarray}\label{eqSahaHydrogen}
    \frac{x^{2}}{1-x}=\frac{1}{\eta}\biggr(\frac{2\pi m_{e}kT}{h^{2}}\biggr)^{\frac{3}{2}}\exp\biggr(\frac{13.6\text{ eV}}{kT}\biggr),
\end{eqnarray}
with $x\in(0,1)$ the fraction of ionized hydrogen and $\eta>0$ the density parameter. Given a density parameter $\eta$, the critical ionization temperature $T_{ion}$ can be estimated by taking $x=99.9\%$ in (\ref{eqSahaHydrogen}), so that
\begin{eqnarray}\label{eqSahaEstimate}
    1000\approx\frac{x^{2}}{1-x}=\frac{1}{\eta}\biggr(\frac{2\pi m_{e}kT_{ion}}{h^{2}}\biggr)^{\frac{3}{2}}\exp\biggr(\frac{13.6\text{ eV}}{kT_{ion}}\biggr).
\end{eqnarray}
For this 99,9\% ionization level and the density parameter $\eta$ varying between 1e+13 $m^{-3}$ to 1e+27 $m^{-3}$, we obtain temperatures that range approximately from 6.2e+3 K to 6.5e+5 K. Therefore, $T_{ion}=1.6\text{e+5 K}$ given by (\ref{eqTempIonizationEstimate}) and the range of temperature values in Figure \ref{figHydrogenIonization} seem fairly coherent with this interval.

\section{Concluding remarks}\label{sec5}

This paper provides a generalization of the Boltzmann-Gibbs distribution (\ref{eqBGDistribution2}) through the nonadditive entropic functional $S_{q,s}$, $0<q<1$, and the optimization problem (\ref{maxProblem}). In particular, Sections \ref{secContinuousSpectrum}-\ref{secElectronicPart} highlight the relevance of the scale factor $k_{s}$ given by (\ref{eqKeIntro}) to ensure a well-behaved macroscopic temperature pattern when the system structure presents certain infinities that can be modeled by an underlying limiting process.

It is worth noting that the structural parameter $\beta$ calculated from (\ref{eqFirstBG}) and, more generally, from (\ref{eqBetaDiscrete}), \textit{does not depend} on the actual value assigned to $k_{s}$. It depends only on $q\in(0,1]$, the energy spectrum, and the internal energy. 

Therefore, the structural parameter is the same whether we adopt $S_{q,s}$ or $S_{q}$ in (\ref{maxProblem}). Furthermore, the family of entropic functionals $\{S_{q}\}_{q\in(0,1]}$  is particularly well-behaved for measuring disorder in a smoothly varying manner. For instance, they are concave, reach their maximum for equal probabilities, and are zero-valued when the outcome is certain.

The relevance of $k_{s}$ (and, therefore, of the family $\{S_{q,s}\}_{q\in(0,1]}$) emerges when one needs to bridge the microscopic configuration of the physical system and the macroscopically observable temperature $T\in\mathbb{R}$. Since $k_{s}$ is a scale factor, a possible way to view it is as making the necessary adjustments to ``translate'' the limiting disorder measured by $S_{q}$ to our macroscopic thermometer scale (in Kelvin, for instance). 

In this sense, (\ref{eqKeIntro}) reveals that the adequate scale factor is naturally embedded in the very essence of the physical system: its energy spectrum and the value of $q\in(0,1)$. Along with the Boltzmann constant $k$ and an adjustment parameter $\sigma>0$, these quantities entirely define the adequate scale factor $k_{s}$ and grant finite, smooth, macroscopically observable temperature values for the entire range of possible internal energies.

Lastly, the choice of $q\in(0,1)$ that describes a specific thermodynamic behavior according to (\ref{eqBGGeneralizedUT}), (\ref{eqHarmonicTU}), (\ref{eqBoxTU}), and (\ref{eqTUFinalHydrogen}) (see Figures \ref{figBGGeneralizedUT}, \ref{figHarmonicOscillatorUT}, \ref{figBoxTrap}, and \ref{figUTHydrogen}) requires further constraints in the model or empirical measurements that extend beyond the boundaries of this paper.

\section*{Acknowledgments}
Fruitful discussions with C. Tsallis at Centro Brasileiro de Pesquisas Físicas (CBPF) are deeply and gratefully acknowledged.

\section*{Appendix}\label{app}
\begin{proof}[Proof of Lemma~{\upshape\ref{lemmaEletronicPartition}}]
To ease notation, I write $f\equiv f_{t,q}$ and $g_{N}\equiv g_{t,N,q}$. First, notice that
\begin{eqnarray}\label{eqParadoxProof1}
    f(\beta_{N})&=&2(1-(1-q)\beta_{N}U)^{\frac{1}{q-1}}U+\sum^{k}_{n=2}2n^{2}(1-(1-q)\beta_{N}\vert E_{n}\vert))^{\frac{1}{q-1}}\vert E_{n}\vert\nonumber\\
    &=&2((1-q)\delta_{N}U)^{\frac{1}{q-1}}U+\sum^{k}_{n=2}2n^{2}(1-(1-q)\beta_{N}\vert E_{n}\vert))^{\frac{1}{q-1}}\vert E_{n}\vert\nonumber\\
    &=&2(1-q)^{\frac{1}{q-1}}\delta_{N}^{\frac{1}{q-1}}U^{\frac{q}{q-1}}\biggr(1+\delta_{N}^{\frac{1}{1-q}}\sum^{k}_{n=2}\frac{n^{2}\vert E_{n}\vert(1-q)^{\frac{1}{1-q}}U^{\frac{q}{1-q}}}{(1-(1-q)\beta_{N}\vert E_{n}\vert))^{\frac{1}{1-q}}}\biggr),
\end{eqnarray}
and, since $\lim_{N\rightarrow\infty}\delta_{N}=0$, we have
\begin{eqnarray}\label{eqParadoxProof2}
    \lim_{N\rightarrow\infty}\delta_{N}^{\frac{1}{1-q}}\sum^{k}_{n=2}\frac{n^{2}\vert E_{n}\vert(1-q)^{\frac{1}{1-q}}U^{\frac{q}{1-q}}}{(1-(1-q)\beta_{N}\vert E_{n}\vert))^{\frac{1}{1-q}}}=0.
\end{eqnarray}
Also, 
\begin{eqnarray*}
    g_{N}(\beta_{N})&=&\sum^{N}_{n=k+1}2n^{2}(1+(1-q)\beta_{N} \vert E_{n}\vert)^{\frac{1}{q-1}}\vert E_{n} \vert\\
    &=&(1-q)^{\frac{1}{q-1}}\sum^{N}_{n=k+1}2n^{2}(U\delta_{N}+\beta_{N}e_{n})^{\frac{1}{q-1}}\biggr(e_{ion}-U-\frac{e_{ion}}{n^{2}}\biggr)\\
    &=&(1-q)^{\frac{1}{q-1}}(e_{ion}-U)\sum^{N}_{n=k+1}2n^{2}(U\delta_{N}+\beta_{N}e_{n})^{\frac{1}{q-1}}-2(1-q)^{\frac{1}{q-1}}e_{ion}\sum^{N}_{n=k+1}(U\delta_{N}+\beta_{N}e_{n})^{\frac{1}{q-1}}.
\end{eqnarray*}
for $N\geq k+1$. Next, notice that
\begin{eqnarray*}
   \frac{(N-k)}{N^{3}}(U+\beta_{N})^{\frac{1}{q-1}}\leq \frac{\sum^{N}_{n=k+1}(U\delta_{N}+\beta_{N}e_{n})^{\frac{1}{q-1}}}{N^{3}}\leq \frac{(N-k)}{N^{3}}(\beta_{N}e_{k+1})^{\frac{1}{q-1}},
\end{eqnarray*}
and, therefore,
\begin{eqnarray}\label{eqLimG1}
    \lim_{N\rightarrow\infty}\frac{\sum^{N}_{n=k+1}(U\delta_{N}+\beta_{N}e_{n})^{\frac{1}{q-1}}}{N^{3}}=0.
\end{eqnarray}
Since $\beta=1/(1-q)U$, we have
\begin{eqnarray*}
    \sum^{N}_{n=k+1}2n^{2}(U\delta_{N}+\beta_{N}e_{n})^{\frac{1}{q-1}}&=&\sum^{N}_{n=k+1}2n^{2}\biggr(\beta e_{ion}+U\delta_{N}+(\beta_{N}-\beta)e_{n}-\frac{\beta e_{ion}}{n^{2}}\biggr)^{\frac{1}{q-1}}\\
    &=&\sum^{N}_{n=k+1}2n^{2}(\beta e_{ion}+\sigma(n,N))^{\frac{1}{q-1}},
\end{eqnarray*}
with
\begin{eqnarray*}
    \sigma(n,N)=U\delta_{N}+(\beta_{N}-\beta)e_{n}-\frac{\beta e_{ion}}{n^{2}},
\end{eqnarray*}
for $k+1\leq n \leq N$. For $0<\varepsilon<\beta e_{ion}/2$, let $M_{1}\geq t+1$ be such that $n,N\geq M_{1}$ implies $\vert \sigma(n,N)\vert<\varepsilon$. Then, for $N\geq M_{1}$,
\begin{eqnarray*}
    \vert \sum^{N}_{n=N_{1}}2n^{2}((\beta e_{ion}+\sigma(n,N))^{\frac{1}{q-1}}-(\beta e_{ion})^{\frac{1}{q-1}})\vert&\leq &\sum^{N}_{n=N_{1}}2n^{2}\vert(\beta e_{ion}+\sigma(n,N))^{\frac{1}{q-1}}-(\beta e_{ion})^{\frac{1}{q-1}}\vert \\
    &\leq &\sum^{N}_{n=N_{1}}2n^{2}\biggr\vert\int^{\beta e_{ion}+\sigma(n,N)}_{\beta e_{ion}} \frac{1}{(q-1)s^{\frac{2-q}{1-q}}}ds\biggr\vert\\
    &\leq &\frac{\varepsilon}{(1-q)(\beta e_{ion}-\varepsilon)^{\frac{2-q}{1-q}}}\sum^{N}_{n=N_{1}}2n^{2}\\
    &=&\frac{\varepsilon}{(1-q)(\beta e_{ion}-\varepsilon)^{\frac{2-q}{1-q}}}\biggr(\frac{2N^{3}}{3}+o(N^{3})\biggr).
\end{eqnarray*}
Notice that
\begin{eqnarray*}
     \lim_{N\rightarrow\infty}\frac{\vert \sum^{M_{1}}_{n=k+1}2n^{2}((\beta e_{ion}+\sigma(n,N))^{\frac{1}{q-1}}-(\beta e_{ion})^{\frac{1}{q-1}})\vert}{N^{3}}=0,
\end{eqnarray*}
and, therefore,
\begin{eqnarray}\label{eqLimEpsilon}
    \lim_{N\rightarrow\infty}\frac{\vert \sum^{N}_{n=k+1}2n^{2}((\beta e_{ion}+\sigma(n,N))^{\frac{1}{q-1}}-(\beta e_{ion})^{\frac{1}{q-1}})\vert}{N^{3}}\leq \frac{2\varepsilon}{3(1-q)(\beta e_{ion}-\varepsilon)^{\frac{2-q}{1-q}}}.
\end{eqnarray}
Since (\ref{eqLimEpsilon}) is valid for arbitrarily small $\varepsilon>0$, we conclude that
\begin{eqnarray}\label{eqLimG2}
    \lim_{N\rightarrow\infty}\frac{\sum^{N}_{n=k+1}2n^{2}(\beta e_{ion}+\sigma(n,N))^{\frac{1}{q-1}}}{N^{3}}=\lim_{N\rightarrow\infty}\frac{\sum^{N}_{n=k+1}2n^{2}(\beta e_{ion})^{\frac{1}{q-1}}}{N^{3}}=\frac{2(\beta e_{ion})^{\frac{1}{q-1}}}{3}.
\end{eqnarray}
Furthermore, (\ref{eqLimG1}) and (\ref{eqLimG2}) imply
\begin{eqnarray*}
    \lim_{N\rightarrow\infty}\frac{g_{N}(\beta_{N})}{N^{3}}=\frac{2(e_{ion}-U)}{3(1-q)^{\frac{1}{1-q}}\beta^{\frac{1}{1-q}}}=\frac{2(e_{ion}-U)U^{\frac{1}{1-q}}}{3 e_{ion}^{\frac{1}{1-q}}}.
\end{eqnarray*}
Finally, (\ref{eqParadoxProof1}) and (\ref{eqParadoxProof2}) imply
\begin{eqnarray*}
    1&=&\lim_{N\rightarrow\infty}\frac{f(\beta_{N})}{g_{N}(\beta_{N})}\\
    &=&\lim_{N\rightarrow\infty}\frac{N^{3}}{g_{N}(\beta_{N})}\frac{f(\beta_{N})}{N^{3}}\\
    &=&\frac{3 e_{ion}^{\frac{1}{1-q}}}{2(e_{ion}-U)U^{\frac{1}{1-q}}}\frac{1}{(1-q)^{\frac{1}{1-q}}U^{\frac{q}{1-q}}}\lim_{N\rightarrow\infty}\frac{1}{\delta_{N}^{\frac{1}{1-q}}N^{3}}\\
    &=&\frac{3 e_{ion}^{\frac{1}{1-q}}}{2(1-q)^{\frac{1}{1-q}}(e_{ion}-U)U^{\frac{1+q}{1-q}}}\biggr(\lim_{N\rightarrow\infty}\frac{1}{\delta_{N}N^{3(1-q)}}\biggr)^{\frac{1}{1-q}},
\end{eqnarray*}
so that
\begin{eqnarray*}
    \lim_{N\rightarrow\infty}N^{3(1-q)}\delta_{N}=\frac{3^{1-q}e_{ion}}{2^{1-q}(1-q)(e_{ion}-U)^{1-q}U^{1+q}}.
\end{eqnarray*}
\end{proof}
\printbibliography
\end{document}